\newtheorem{theorem}{Theorem}
\newtheorem{corollary}{Corollary}
\newtheorem{lemma}{Lemma}
\newtheorem{definition}{Definition}
\begin{document}

\title{Control of Multi-Layer Mobile Autonomous Systems in Adversarial Environments: A Games-in-Games Approach}

\author{
          Juntao~Chen,~\IEEEmembership{Student Member,~IEEE,}
       and Quanyan~Zhu,~\IEEEmembership{Member,~IEEE}

\thanks{Juntao Chen and Quanyan Zhu are with the Department of Electrical and Computer Engineering, Tandon School of Engineering, New York University, Brooklyn, NY, 11201 USA. E-mail: \{jc6412, qz494\}@nyu.edu.}
\thanks{This paper has been accepted for publication in \textit{IEEE Transactions on Control of Network Systems}.}}

\maketitle

\begin{abstract}
Mobile autonomous system (MAS) becomes pervasive especially in the vehicular and robotic networks. Multiple heterogeneous MAS networks can be integrated together as a multi-layer MAS network to offer holistic services. The network connectivity of multi-layer MAS plays an important role in the information exchange between agents within and across different layers of the network. In this paper, we establish a \textit{games-in-games} framework to capture the uncoordinated nature of decision makings under adversarial environment at different layers. Specifically, each network operator controls the mobile agents in his own subnetwork and designs a secure strategy to maximize the global network connectivity by considering the behavior of jamming attackers that aim to disconnect the network. The solution concept of \textit{meta-equilibrium} is proposed to characterize the \textit{system-of-systems} behavior of the autonomous agents. For online implementation of the control, we design a resilient algorithm that improves the network algebraic connectivity iteratively. We show that the designed algorithm converges to a meta-equilibrium asymptotically. Finally, we use case studies of a two-layer MAS network to corroborate the security and agile resilience of the network controlled by the proposed strategy.
\end{abstract}

\begin{IEEEkeywords}
Mobile autonomous system, Multi-layer networks, Games-in-Games, Connectivity, Cybersecurity
\end{IEEEkeywords}


\section{Introduction}
Cooperative mobile autonomous system (MAS) has a wide range of applications, such as rescue and monitoring the crowd in mission critical scenarios \cite{erman2008enabling,erdelj2017help}. One of the challenges in designing the MAS network is to maintain the connectivity between agents/robots\footnote{The ``agent" refers to the robot in our MAS network. We also use the terms ``MAS network" and ``robotic network" interchangeably.}, since a higher connectivity enables faster information spreading and hence a high-level of situational awareness. Connectivity control of the mobile robotic networks has been addressed in a number of previous works including \cite{michael2009maintaining,kim2006maximizing,simonetto2013constrained}. They have successfully tackled a single network of cooperative robots. Recent advances in networked systems have witnessed emerging applications involving multi-layer networks or \textit{network-of-networks} \cite{d2014networks,martin2014algebraic,
pawlick2018istrict}. For example, in the battlefield, unmanned aerial vehicles (UAVs) and unmanned ground vehicles (UGVs) execute tasks together, and the whole network can be seen as a two-layer interdependent network. { The connectivity of the two-layer network can play a key role in the operations for a given collaborative mission. To enable the real-time decision making of each agent, the integrated network needs to guarantee a level of connectivity.} Therefore, the current single network control paradigm is not yet sufficient to address the challenges related to the analysis and design of multi-layer MAS networks. The main objective of this work is to develop a theoretic framework that can capture the interactions between agents within a network and across networks and enable the design of distributed control algorithms that can maintain the connectivity in both adversarial and non-adversarial environment.

In our problem, the operator of each layer MAS network aims to maximize the algebraic connectivity \cite{fiedler1973algebraic} of the global network. If the whole network is fully cooperative or governed by a single network operator, then the designed network is a \textit{team-optimal} solution. However, in practice, different layers of robotic networks are often operated by different entities, which makes the coordination between separate entities difficult. This uncoordinated control design naturally leads to a \textit{system-of-systems} (SoS) framework of the multi-layer MAS network. For example, in the aforementioned two-layer UAV and UGV mobile networks, though the objectives of two network operators are aligned, the UAVs are operated by one entity while the UGVs is operated by another. The lack of the centralized planning can result in insufficient coordination between two networks and lead to disruptions in connectivity and security vulnerabilities. To address this problem, we establish a \textit{Nash} game-theoretic model in which two players, i.e., network operators, control robots at their layer, to maximize the global connectivity independently. This model captures the lack of coordination between players and their decentralized decision making in optimizing the SoS performance.

Cybersecurity is another critical concern of MAS, since robots are prone to adversarial attacks, e.g., the communication links between robots can be jammed (e.g., \cite{xu2006jamming,chen2019optimal}) which decreases the connectivity. Therefore, secure control of the multi-layer MAS networks is critical to maintain the SoS performance at a high level. To this end, we model the mobility of robots by taking into account the imperfect communication links under adversarial environment. Specifically, each network operator anticipates the jamming attacks and controls robots by anticipating that a set of critical links between agents can be compromised. This secure control design can be modeled by a \textit{Stackelberg} game between each network operator and the attacker. 

In this work, we integrate the modeled Nash game between two network operators as well as the Stackelberg game between the network operator and the attacker which further yields a \textit{games-in-games} framework. This new type of game provides a holistic modeling that integrates the network-network interactions and the agent-adversary interactions together for the secure and decentralized control design of multi-layer MAS networks. We propose a \textit{meta-equilibrium} solution for this games-in-games which includes the optimal strategy of network operators and the strategic jamming attacks of adversary.
We further develop a resilient and decentralized mechanism that guides the algorithmic design of MAS for achieving a meta-equilibrium solution. A typical example is that two network operators update their own network in a round-robin fashion based on the current network to maximize the network connectivity. This alternating-play mechanism induces an iterative algorithm that can converge to a meta-equilibrium asymptotically. 

{The games-in-games framework provides a theoretical foundation for understanding the agile resilience of the system to cyberattacks, which is a critical system property for the MAS network to recover quickly especially for mission-critical applications. When robots or communication links in the network are compromised, the integrated MAS network under the designed control strategy needs to respond to the unexpected disruptions with agility to mitigate the loss of connectivity. Hence, to investigate the resilience of the designed algorithm,} we first introduce another type of cyber threats called GPS spoofing attacks. The resilience is measured by the enhanced SoS performance through the design of post-attack control strategies. Simulation results show that the multi-layer MAS network is resilient to attacks using the proposed control method. After the detection of spoofing attack by the network operator, the MAS network shows agile resilience to the attack and the system adapts and reconfigures itself to an efficient meta-equilibrium that coincides with the one without attack.  

The contributions of this paper are summarized as follows:
\begin{enumerate}
\item We establish a games-in-games framework that enables uncoordinated and secure control of MAS in the multi-layer networks under cyberattacks.
\item We propose a meta-equilibrium solution concept to capture the interdependent decision makings of network players in a holistic fashion, and characterize the strategic behavior of the attacker explicitly.
\item We design a resilient and decentralized iterative algorithm that aims to maximize the algebraic connectivity of the global MAS network, and show its asymptotic convergence to a meta-equilibrium.
\item We corroborate the security and resiliency of the designed control algorithm for multi-layer MAS network extensively by case studies.
\end{enumerate}

\subsection{Related Work}
MAS has been applied to a number of emerging fields. One of them is drone delivery \cite{dorling2017vehicle} which has attracted significant attention from industries.
Another one is the unmanned aerial vehicles (UAVs) assisted sensing and communication networks for disaster response and recovery \cite{erdelj2017help,erdelj2017wireless}. Since we focus on controlling groups of MAS, our work is also related to the classical control of multi-agent systems \cite{martin2014multiagent,olfati2007consensus,
nedic2010constrained,ni2010leader}. 

One critical factor needs to be considered for MAS network is its connectivity. When MAS is adopted in mission critical scenarios, such as battlefields and disaster-affected areas, a higher network connectivity provides a higher level of situational awareness \cite{chen2019dynamic}. In this work, we adopt Fiedler value \cite{fiedler1973algebraic}  (or called algebraic connectivity) as a metric to measure the network connectivity.
Maximizing the algebraic connectivity of networks has been investigated extensively in literature, including single-layer static network \cite{ghosh2006growing}, single-layer mobile network \cite{michael2009maintaining,zavlanos2011graph,poonawala2015collision
,yang2010decentralized,martin2014multiagent}, multi-layer static network \cite{martin2014algebraic,shakeri2016maximizing}.  We focus on optimizing a new category of multi-layer MAS network connectivity in this work. Moreover, we design the network strategically by considering prevalent cyberattack models on the communication networks \cite{xu2006jamming}. 

Connectivity control of mobile robotic network has been often addressed either in a fully centralized way \cite{michael2009maintaining,hsieh2008maintaining} or completely decentralized way  \cite{simonetto2013constrained,yang2010decentralized}. 
Our proposed model stands in between these two frameworks, and thus results in balanced features in terms of resiliency and optimality. Some results of this paper have been presented in Chapter 3 of \cite{chengame_book}. This work extends it in multiple aspects, including more analytical results with complete proofs, thorough framework development, extensive discussions of the model and results, and new case studies.

\subsection{Organization of the Paper}
The rest of the paper is organized as follows. Section~\ref{s1} presents some basics of graph theory. Section~\ref{s2} formulates the secure multi-layer MAS network formation problem. We analyze the problem and present the meta-equilibrium solution concept in Section~\ref{analysis}. Section \ref{s3} reformulates the problem and develops an iterative algorithm to compute the solution. Section~\ref{s4} analyzes the strategy of attacker and another type of cyberattacks. Section~\ref{s5} uses some case studies to illustrate the results. Section~\ref{conclusion} concludes the paper.

\section{Background and Interdependent Networks }\label{s1}
In this section, we present the basics of interdependent networks and the algebraic connectivity metric.
\subsection{Preliminaries}
Let $G(V,E)$ be an undirected graph composed by $n$ nodes and interconnected by $m$ links. For a link $l$ that connects nodes $i$ and $j$ where the link weight equaling to $w_{ij}$, we define two vectors $\mathbf{a}_l$ and $\mathbf{b}_l$, where $\mathbf{a}_l{(i)}=1$, $\mathbf{a}_l{(j)}=-1$, $\mathbf{b}_l{(i)}=w_{ij}$, $\mathbf{b}_l{(j)}=-w_{ij}$, and all other entries 0. Then, the Laplacian matrix $\mathbf{L}$ of network $G$ can be expressed as 
\begin{equation}\label{laplacian2}
\mathbf{L}=\displaystyle\sum_{l=1}^m{\mathbf{a}_l}{\mathbf{b}_l^T},
\end{equation}
where ``$T$'' denotes the matrix transpose operator. Intuitively, the diagonal entry $\mathbf{L}_{ii}$ in the Laplacian matrix is equal to the weight of node $i$, i.e., $\mathbf{L}_{ii}=\sum_{j\in\mathcal{N}_i}w_{ij},\ \forall i\in V$, where $\mathcal{N}_i$ denotes the set of nodes that connects with node $i$. In addition, $\mathbf{L}_{ij}=-w_{ij},\ \forall i\neq j\in V,$ if nodes $i$ and $j$ are connected, and otherwise is 0. Additionally, Laplacian matrix is positive semidefinite, and $\mathbf{L}\mathrm{\mathbf{1}}=0$, where $\mathbf{1}$ is an $n$-dimensional vector with all one entries. Thus, by ordering the eigenvalues of $\mathbf{L}$ in an increase way, we obtain
\begin{equation}\label{lambdarelation}
0=\lambda_1\le \lambda_2\le...\le \lambda_n,
\end{equation}
where $\lambda_2(\mathbf{L})$ is called \textit{algebraic connectivity} (or Fiedler value) of $G$ \cite{fiedler1973algebraic}. Algebraic connectivity is an indicator of how well a graph is connected. Moreover, $\lambda_2(\mathbf{L})=0$, when $G$ is not connected. For a graph with Laplacian $\mathbf{L}$, the algebraic connectivity $\lambda_2(\mathbf{L})$ 
can be computed from the Courant-Fisher theorem \cite{horn2012matrix} as follows:
\begin{equation}\label{lambdainf}
\lambda_2(\mathbf{L})=\mathrm{min}\{{z^T\mathbf{L}z}|{z\in \mathbf{1}^\perp},||z||_2= 1 \},
\end{equation}
where $||\cdot||_2$ denotes the standard $L_2$ norm.
\subsection{Interdependent Networks}
For a two-layer interdependent network, we define two networks $G_1(V_1,E_1)$ and $G_2(V_2,E_2)$, where network 1 and network 2 are represented by $G_i$, for $i=1,2$, respectively. Network $i$, $i\in\{1,2\}$, is composed of $n_i=|V_i|$ nodes and $m_i=|E_i|$ links, where $|\cdot|$ denotes the cardinality of a set. The global network resulting from the connection of these two networks can be represented by $G=(V_1\ {\cup}\ V_2,\ E_1\ {\cup}\ E_2\ {\cup}\ E_{12})$, where $E_{12}$ is a set of \textit{intra-links} between $G_1$ and $G_2$. For convenience, we denote the network consisting of the intra-links between $G_1$ and $G_2$ as $G_{12}$. 
The adjacency matrix $\mathbf{A}$ of the global network $G$ has the entry $a_{ij}=w_{ij}$ if nodes $i$ and $j$ are connected and $a_{ij}=0$ otherwise.
Let $\mathbf{A}_1\in \mathbb{R}^{n_1\times n_1}$ and $\mathbf{A}_2\in \mathbb{R}^{n_2\times n_2}$ be the adjacency matrices of $G_1$ and $G_2$, respectively, and $n=n_1+n_2$. When $E_{12}\ne{\emptyset}$, the adjacency matrix $\mathbf{A}\in \mathbb{R}^{n\times n}$ takes the following form
$
\mathbf{A}=\begin{bmatrix}
\mathbf{A}_1&\mathbf{B}_{12}\\
\mathbf{B}_{12}^T&\mathbf{A}_2
\end{bmatrix},
$
where $\mathbf{B}_{12}\in \mathbb{R}^{n_1\times n_2}$ is a matrix capturing the effect of intra-links between networks. Define two diagonal matrices $\mathbf{D}_1\in \mathbb{R}^{n_1\times n_1}$ and $\mathbf{D}_2\in \mathbb{R}^{n_2\times n_2}$ as
$
\begin{cases}
(\mathbf{D}_1)_{ii}=\sum_{j}{(\mathbf{B}_{12})_{ij}},\\
(\mathbf{D}_2)_{ii}=\sum_{j}{(\mathbf{B}_{12}^T)_{ij}}.
\end{cases}
$
Based on
$
\mathbf{L}=\mathbf{D}-\mathbf{A}, 
$
we obtain the Laplacian matrix
$
\mathbf{L}=\begin{bmatrix}
\mathbf{L}_1+\mathbf{D}_1&-\mathbf{B}_{12}\\
-\mathbf{B}_{12}^T&\mathbf{L}_2+\mathbf{D}_2
\end{bmatrix},
$
where $\mathbf{L}_1$ and $\mathbf{L}_2$ are the Laplacians corresponding to $\mathbf{A}_1$ and $\mathbf{A}_2$, respectively.

\textit{Remark}: The above formulated two-layer MAS network can be easily extended to multi-layer scenarios.

\section{System Framework and Problem Formulation}\label{s2}
In this section, we introduce the system framework which includes the wireless communication model and the strategic interdependent MAS network formation.

\subsection{Wireless Communication Model}
In the MAS, we consider a set $V$ of robots in the network, and their positions at time $k$ are defined by the vector $\mathbf{x}(k)=\big(x_1(k);\ x_2(k);...;x_n(k)\big)\in{\mathbb{R}^{3 n}}$. 
Robots in the same network can exchange data via wireless communications. Denote the communication link between robots $i$ and $j$ as $(i,j)$. Then, the strength of the communication link $(i,j)$ is similar to the weight of the link in a network.  Thus, we associate a weight function
$
w: \mathbb{R}^3\times \mathbb{R}^3\rightarrow \mathbb{R}_+
$
with every communication link $(i,j)$, such that
\begin{equation}\label{wei}
\begin{split}
w_{ij}(k)&=w\big(x_i(k),x_j(k)\big)=f\left(\parallel
x_{ij}(k)\parallel_2^2\right),
\end{split}
\end{equation}
for some differentiable $f$ : $\mathbb{R}_+\rightarrow \mathbb{R}_+$, where $x_{ij}(k):=x_i(k)-x_j(k)$, and $\parallel
x_{ij}(k)\parallel_2$ is the distance between robots $i$ and $j$ at time $k$. To capture the communication strength decay with the distance, $f$ is a monotonically decreasing function.
A typical choice of $f$ is $f(d) = \delta^{(c_1-d)/(c_1-c_2)}$, where $\delta$, $c_1$ and $c_2$ are positive constants. Note that different forms of $f$ capture various decay rates of communication strength with distance \cite{tse2005fundamentals}. Thus, the weight of the link between robots is positive if their distance is within a threshold and degenerates to zero otherwise. Fig. \ref{Aij} shows an example of $f$ with $\delta=0.1$, $c_1=2$ and $c_2=6$.

\begin{figure}[t]
\centering
\includegraphics[width=0.65\columnwidth]{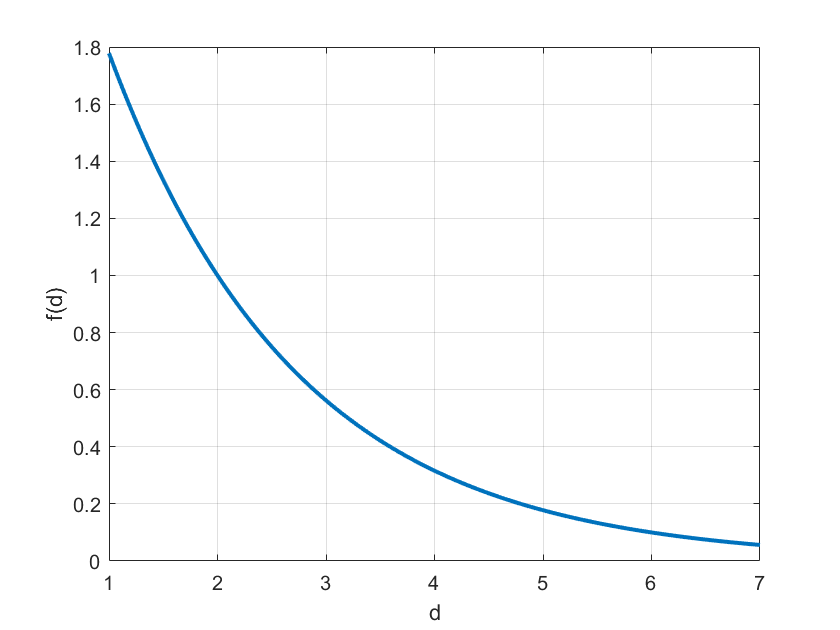}
\caption{Communication strength under function $f(d) = \delta^{(c_1-d)/(c_1-c_2)}$ with $\delta=0.1$, $c_1=2$ and $c_2=6$.}\label{Aij}
\end{figure}

\subsection{Secure Interdependent MAS Network Formation}
A two-layer MAS network model is shown in Fig. \ref{model}, where networks $G_1$ and $G_2$ include $n_1$ and $n_2$ number of robots, respectively. More generally, we label robots in $G_1$ as $1,2,...,n_1$, and robots in $G_2$ as $n_1+1,n_1+2,...,n_1+n_2$, i.e., ${V}_1:=\{1,2,...,n_1\}$ and ${V}_2:=\{n_1+1,n_1+2,...,n\}$. Note that $n=n_1+n_2$. Robots in these two layers can also communicate, and this kind of communication link is called \textit{intra-link} while the link inside of a network is known as \textit{inter-link}. 
The agents at two layers are interdependent, and thus the integrated MAS network can be modeled as a \textit{system-of-systems}.
\begin{figure}[t]
\centering
 \includegraphics[width=0.65\columnwidth]{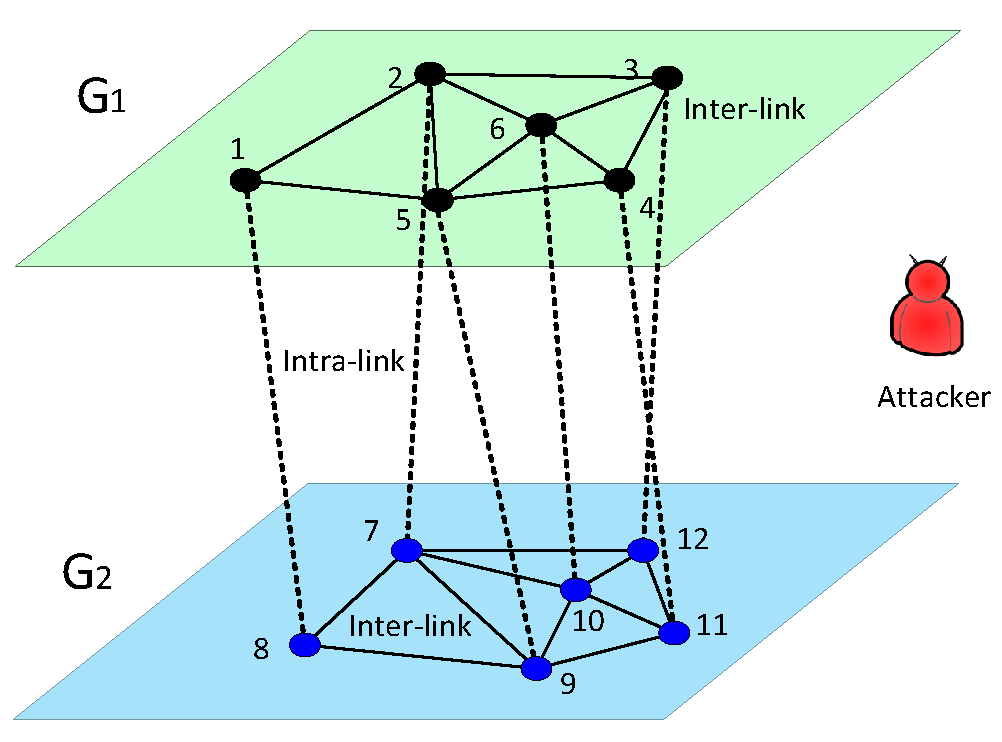}
\caption{Multi-layer MAS network in an adversarial environment.}\label{model}
\end{figure}

\subsubsection{Network Designer} We consider two players, player 1 ($P_1$) and player 2 ($P_2$), operating two interdependent MAS networks. $P_1$ controls robots in network $G_1$, and $P_2$ controls robots in $G_2$. Specifically, $P_1$ and $P_2$ update their own network with a fixed frequency by controlling the positions of robots. After each update, the communication link strength between robots are modified due to the change of distance. For simplicity, define $-\gamma:= \{1,2\}\setminus \gamma$, where $\gamma\in\{1,2\}$, and $\mathbf{x}:=(\mathbf{x}_1,\mathbf{x}_2)$, where $\mathbf{x}_1:=(x_1;...;x_{n_1})\in\mathbb{R}^{3 n_1}$ and $\mathbf{x}_2:=(x_{n_1+1};...;x_n)\in\mathbb{R}^{3 n_2}$. Specifically, $\mathbf{x}_1$ and $\mathbf{x}_2$ are decision variables denoting the position of robots in $G_1$ and $G_2$, respectively.  In addition, the action spaces of $P_1$ and $P_2$ are denoted by $\mathbf{X}_1$ and $\mathbf{X}_2$, respectively, which include all the possible network configurations. The set of pure strategy profiles $\mathbf{X}:=\mathbf{X}_1\times \mathbf{X}_2$ is the Cartesian product of the individual pure strategy sets. For each update, $P_\gamma$'s strategy $\mathbf{x}_\gamma$ is based on the current configuration of network $G_{-\gamma}$. The goal of both players is to optimize the SoS performance, i.e., maximize the algebraic connectivity of the global network $G$. Hence, the utility function for both players is $\lambda_2\big(\mathbf{L}_G(\mathbf{x})\big)$: $\mathbf{X}\rightarrow \mathbb{R}_+$, where $\mathbf{L}_G(\mathbf{x})$ is the Laplacian matrix of network $G$ when mobile robots have position $\mathbf{x}$.

In the adversarial network formation game, one of the constraints is the minimum distance between robots in each layer. Without this constraint, all robots at the same layer will converge to one point finally which is not a reasonable solution. Thus, we assign a minimum distance $\rho_1$ and $\rho_2$ for robots  in $G_1$ and $G_2$, respectively.

\subsubsection{Cyber Attacker} In addition to the network players $P_1$ and $P_2$, our framework also includes a malicious jamming attacker as shown in Fig. \ref{model}. The attacker is able to disrupt communication links via injecting a large amount of spam into the channel which leads to the link breakdown eventually because of overload of the link. The attacker's objective is to minimize the algebraic connectivity of the network through compromising links. Generally, the behavior of attacker is unknown to the network operators. Therefore, it is difficult for the network designers to make optimal strategies that can achieve the best performances of the network. However, by knowing that attackers are strategic and are more prone to disrupt the critical communication links in the network, the network operators can design a secure MAS network resistant to cyberattacks. Specifically, network designers first anticipate that the attacker can compromise a number $\psi\in\mathbb{N}^+$ of links, and then design the MAS network by taking into account the worst-case attack that leads to the most decrease of the network algebraic connectivity. The resistant property of the network is reflected by the fact that the designers prepare for the potential attacks and respond to them with best strategies. Here, $\psi$ quantifies the security level of the designed network. 

Denote $\mathcal{A}$ by the action space of the attacker which is the set including all the possible single communication link removal in the network. For convenience, we denote $\mathbf{L}^{e}_G(\mathbf{x})$ by the Laplacian matrix of the network after removing a set of links $e\subseteq \mathcal{A}$, i.e., the network after attack is $G(V,E_1\cup E_2\cup E_{12}\setminus{e})$, and the cardinality of $e$ is $|e| = \psi$ quantifying the ability of attacker, where $\psi\in\mathbb{N}^+$ is a positive integer. Denote the feasible set of $e$ by $\mathcal{E}$. Then, the cost function of the attacker can be captured by $\Lambda(\mathbf{x}_1,\mathbf{x}_2,e)\triangleq \lambda_2\big(\mathbf{L}^{e}_G(\mathbf{x})\big)$, for $\Lambda: \mathbf{X}_1\times \mathbf{X}_2\times \mathcal{E} \rightarrow \mathbb{R}_+$.

\subsection{Games-in-Games Formulation}\label{gamesingames}
During the MAS network formation, the interactions between two networks $G_1$ and $G_2$ can be modeled as a Nash game where both players aim to increase the global network connectivity. In addition, each network operator  plays a Stackelberg game with the malicious jamming attacker. Therefore, the multi-layer MAS network formation in the adversarial environment can be characterized by a games-in-games framework which is shown in Fig. \ref{games_in_games}. 
 In the following, we specifically formulate the attacker's and network operators' problems, respectively. 

\begin{figure}[t]
\centering
 \includegraphics[width=0.7\columnwidth]{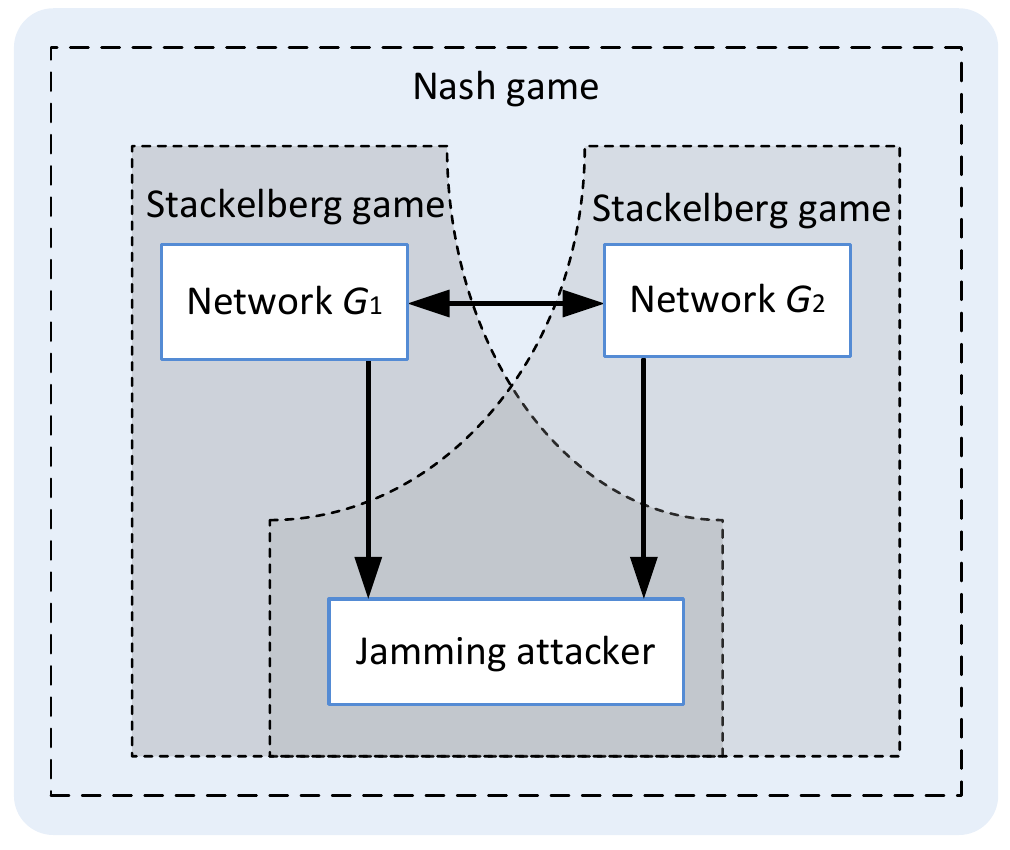} 
\caption{Games-in-Games framework which includes two network operators and one attacker. Both network operators prepare for the cyberattack which form a Stackelberg game with the attacker. In addition, two network operators are uncoordinated and aim to maximize the global network connectivity which create a Nash game.}\label{games_in_games}
\end{figure}

\subsubsection{Stackelberg Game}
In the Stackelberg game, network designer is the leader, and the jamming attacker is the follower.  The objective of the attacker is to minimize the algebraic connectivity of network $G$.  We can summarize the strategic behavior of the attacker into the following problem:
\begin{equation}
\mathcal{Q}_A^k:\ \ \ \min_{e\subseteq \mathcal{A},|e| = \psi}\  \lambda_2\big(\mathbf{L}^{e}_G(\mathbf{x}(k+1))\big).
\end{equation}

On the leader side, network operator $P_\gamma$ maximizes the algebraic connectivity of the network, where $\gamma\in\{1,2\}$, and his decision can be obtained via solving the optimization problem:
\begin{equation}\label{p1}
\begin{split}
\mathcal{Q}_\gamma^k:\ \ \ &\max_{\mathbf{x}_\gamma(k+c_\gamma)}\ \min_{e\subseteq \mathcal{A},|e| = \psi}\  \lambda_2\big(\mathbf{L}^e_G(\mathbf{x}(k+c_\gamma))\big)\\
\mathrm{s.t.}\ \ \ &||x_{ij}(k+c_\gamma)||_2\ge \rho_\gamma, \quad \forall (i,j)\in E_\gamma,\\
&||x_{ij}(k+c_\gamma)||_2\ge \rho_{12},\quad \forall i\in V_\gamma,\ \forall j\in V_{-\gamma},\\
&||x_{i}(k+c_\gamma)-x_{i}(k)||_2\leq d_\gamma,\quad \forall i\in V_\gamma,\\
& x_{j}(k+c_\gamma) = x_{j}(k),\ \forall  j\in V_{-\gamma}, 
\end{split}
\end{equation}
where $c_\gamma\in\mathbb{N}^+$ is a positive integer indicating the update frequency; $\rho_\gamma\in\mathbb{R}_+$ is the safety distance between robots; $\rho_{12}\in\mathbb{R}_+$ is the minimum distance between robots in different layers; and $d_\gamma\in\mathbb{R}_+$ is the maximum distance that robots in network $G_\gamma$ can move at each update. The constraint $x_{j}(k+c_\gamma) = x_{j}(k),\ j\in V_{-\gamma}$ captures the uncoordinated nature that each network operator can only control the robots at his layer. Furthermore, this constraint preserves security consideration between agents in $V_{-\gamma}$ and also ensures consistent connectivity improvement when player $\gamma$ updates his network.

The Stackelberg game between the attacker and network operator $\gamma$ can be represented by $\Xi_{\gamma}:=\{\mathcal{N}_{\gamma},\mathbf{X}_{\gamma},\mathcal{A},\lambda_2\}$ for $\gamma\in\{1,2\}$, where $\mathcal{N}_\gamma:=\{P_\gamma,Attacker\}$ is the set of players, $\mathbf{X}_\gamma$ and $\mathcal{A}$ are action spaces and $\lambda_2$ is the objective function.

\subsubsection{Nash Game}
The interaction between two robotic networks in an adversarial environment can be characterized as a Nash game in which both players aim to increase the global network connectivity. We denote this strategic game by $\Xi_I:=\{P_1,P_2,\mathbf{X}_1,\mathbf{X}_2,\lambda_2\}$. 

Note that the MAS network formation game is played repeatedly over time, and its structure is the same only with different initial conditions in terms of the robots' position. 
This two-person interdependent MAS network formation game can be naturally generalized into an $N$-person game where each player controls a subset of robots in the multi-layer networks.

\textit{Remark:} In the focused model, the attacker is the follower and is not assumed to act in a proactive way. The attacker is a player in the game whose strategic action is anticipated by the network designers. Under some other models, e.g., multi-stage sequential games where the attacker takes an action followed by the defender, the attacker plays a non-passive role and can manipulate the outcome by selecting an equilibrium. This case is a further extension of the model introduced in this paper. The methodologies that have introduced in this work can be extended and applied to this new context.

It is possible that the attacker could manipulate and exploit operator's anticipation for his own benefit. To capture this scenario, we need to propose an alternative model that extends the current framework. In this work, we assume that the network operators choose an action by anticipating credible adversarial behaviors. The attacker can manipulate the above defense mechanism by anticipating defender's strategy. One way to achieve this goal for the attacker is to make decisions over a time horizon instead of a single time step. When the time horizon contains two steps, then the attacker address a three-stage game (attacker-defender-attacker) that he takes the lead first. A dynamic game of similar three-stage structure has been investigated in \cite{alderson2011solving,chen2019dynamic}. Interested readers can refer to that for detailed description of the framework.

\section{Problem Analysis and Solution Concepts}\label{analysis}
In this section, we first reformulated problems in Section \ref{s2}, and then present the solution concept of the MAS network formation game. 

\subsection{Problem Reformulation}
Note that each network designer updates the robotic network iteratively based on the current configuration. It is essential to obtain the relationship between the updated position and the current one due to the natural dynamics of robots. To achieve this goal, we define $\mathcal{Z}_{ij}(k):=||x_{ij}(k)||_2^2$ for notational convenience. Analogous to applying Euler's first order method to continuous dynamics, we can obtain $\mathcal{Z}_{ij}(k+m)$ based on the current positions $x_i(k)$ and $x_j(k)$ as follows:
 \begin{equation}\label{discretedistance}
\begin{split}
{\mathcal{Z}}_{ij}&(k+m)+{\mathcal{Z}}_{ij}(k) \\
&= 2\{{x}_i(k+m)-{x}_j(k+m)\}^T\{x_i(k)-x_j(k)\}.
\end{split}
\end{equation}
Readers interested in the details of obtaining \eqref{discretedistance} can refer to Appendix \ref{appDerive}.
Similarly, by using the function $f$ in \eqref{wei}, the updated weight $w_{ij}(k+m)$ can be expressed as:
\begin{equation}\label{weight}
\begin{split}
w_{ij}(k+m)&=w_{ij}(k)+\frac{\partial f}{\partial ||x_{ij}||_2^2}\biggl\rvert_k (\mathcal{Z}_{ij}(k+m)-\mathcal{Z}_{ij}(k)).
\end{split}
\end{equation}
Therefore, we can obtain the Laplacian matrix $\mathbf{L}_G\big(\mathbf{x}(k+m)\big)$ by using \eqref{weight} for the global network.

Each network designer needs to solve a $\mathit{max}\ \mathit{min}$ problem which are not straightforward to deal with. We first present the following result.

\begin{theorem}\label{Thm1}
For a network containing $n$ nodes, the optimization problem 
\begin{equation}\label{thm12}
\max_{\mathbf{x}}\ \min_{e\subseteq \mathcal{A},|e|=\psi}\ \lambda_2(\mathbf{L}_G^e(\mathbf{x}))
\end{equation}
is equivalent to 
\begin{equation}\label{thm1}
\begin{split}
&\max_{\mathbf{x},\beta}\ \ \beta\\
&\mathrm{s.t.}\ \ \ \mathbf{L}_G^e(\mathbf{x})\succeq \beta (\mathbf{I}_{n}-\frac{1}{n}{\mathbf{1}\mathbf{1}}^T),\ \forall e\subseteq \mathcal{A},\ |e|=\psi,
\end{split}
\end{equation}
where $\beta$ is a scalar, and $\mathbf{I}_n$ is an $n$-dimensional identity matrix. Note that the optimal $\mathbf{x}$ and the corresponding objective values in these two problems are equal.
\end{theorem}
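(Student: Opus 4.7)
The plan is to prove the equivalence through the variational characterization of $\lambda_2$ given in \eqref{lambdainf}, exploiting the fact that $\mathbf{M}:=\mathbf{I}_n-\tfrac{1}{n}\mathbf{1}\mathbf{1}^T$ is precisely the orthogonal projector onto $\mathbf{1}^\perp$. The key reduction is to show that for any Laplacian $\mathbf{L}$ of an $n$-node graph and any scalar $\beta$,
\begin{equation*}
\mathbf{L}\succeq\beta\mathbf{M}\ \Longleftrightarrow\ \lambda_2(\mathbf{L})\ge\beta.
\end{equation*}
Once this bi-implication is established, the inner minimization $\min_{|e|=\psi}\lambda_2(\mathbf{L}_G^e(\mathbf{x}))$ is simply the largest $\beta$ such that $\mathbf{L}_G^e(\mathbf{x})\succeq\beta\mathbf{M}$ holds \emph{simultaneously} for every admissible attack $e$, and the outer maximization over $\mathbf{x}$ then yields the joint program in \eqref{thm1}.

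For the forward direction ($\Rightarrow$) I would take the unit-norm Fiedler vector $z^\star\in\mathbf{1}^\perp$ attaining the minimum in \eqref{lambdainf}. Then $(z^\star)^T\mathbf{L}z^\star=\lambda_2(\mathbf{L})$ and $(z^\star)^T\mathbf{M}z^\star=\|z^\star\|_2^2=1$ because $\mathbf{M}$ acts as the identity on $\mathbf{1}^\perp$; the PSD inequality evaluated at $z^\star$ therefore forces $\lambda_2(\mathbf{L})\ge\beta$. For the reverse direction ($\Leftarrow$) I would decompose an arbitrary $z\in\mathbb{R}^n$ as $z=\alpha\mathbf{1}+z_\perp$ with $z_\perp\in\mathbf{1}^\perp$. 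Since $\mathbf{L}\mathbf{1}=0$ and $\mathbf{M}\mathbf{1}=0$, both quadratic forms annihilate the component along $\mathbf{1}$, giving $z^T\mathbf{L}z=z_\perp^T\mathbf{L}z_\perp$ and $z^T\mathbf{M}z=\|z_\perp\|_2^2$. If $z_\perp=0$ the inequality is trivially $0\ge 0$; otherwise normalizing $z_\perp/\|z_\perp\|_2$ and invoking \eqref{lambdainf} yields $z_\perp^T\mathbf{L}z_\perp\ge\lambda_2(\mathbf{L})\|z_\perp\|_2^2\ge\beta\|z_\perp\|_2^2$, which is exactly $z^T\mathbf{L}z\ge\beta\,z^T\mathbf{M}z$.

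With the bi-implication in hand, I would assemble the theorem as follows. Fix any feasible $\mathbf{x}$. Applying the equivalence to each $\mathbf{L}_G^e(\mathbf{x})$ shows that the family of constraints $\{\mathbf{L}_G^e(\mathbf{x})\succeq\beta\mathbf{M}\}_{|e|=\psi}$ is feasible in $\beta$ if and only if $\beta\le\min_{|e|=\psi}\lambda_2(\mathbf{L}_G^e(\mathbf{x}))$. Hence the largest $\beta$ compatible with all PSD constraints equals $\min_{|e|=\psi}\lambda_2(\mathbf{L}_G^e(\mathbf{x}))$, so the inner objectives of \eqref{thm12} and \eqref{thm1} agree pointwise in $\mathbf{x}$. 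Taking $\sup_\mathbf{x}$ on both sides preserves this equality, and because the supremum is achieved at the same $\mathbf{x}^\star$ in both formulations, the optimal decision variables coincide.

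The only subtle point, and what I regard as the main potential pitfall, is handling the null direction of the Laplacian: one cannot replace $\mathbf{M}$ by $\mathbf{I}_n$ in \eqref{thm1}, because $\mathbf{L}\succeq\beta\mathbf{I}_n$ would contradict $\lambda_1(\mathbf{L})=0$ for any $\beta>0$. The projector $\mathbf{M}$ is chosen precisely to ``quotient out'' the trivial eigenpair $(\mathbf{1},0)$, and the decomposition argument above is the step that rigorously justifies this. Everything else is a direct application of Courant--Fisher together with the outer max being identical in both problems.
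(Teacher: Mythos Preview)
Your proof is correct. Both you and the paper hinge on the Courant--Fischer identity \eqref{lambdainf} and on recognizing $\mathbf{M}=\mathbf{I}_n-\tfrac{1}{n}\mathbf{1}\mathbf{1}^T$ as the projector onto $\mathbf{1}^\perp$, and the ``$\Rightarrow$'' direction (evaluating the PSD inequality at a unit vector in $\mathbf{1}^\perp$) is handled identically. The difference lies in the converse direction, i.e.\ showing that $\lambda_2(\mathbf{L})\ge\beta$ forces $\mathbf{L}\succeq\beta\mathbf{M}$. The paper builds an orthonormal eigenbasis $\{v_i\}$, writes $\mathbf{L}=\sum_{i\ge 2}\lambda_i v_iv_i^T$, and then argues that $\sum_{i\ge 2}(\lambda_i-\lambda_2)v_iv_i^T\succeq 0$ yields $\mathbf{L}\succeq \lambda_2\mathbf{M}$; feasibility of $\beta=\lambda_2(\mathbf{L}_G^{e^*}(\mathbf{x}^*))$ for all $e$ then follows because $e^*$ is the worst attack. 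You instead decompose an arbitrary test vector as $z=\alpha\mathbf{1}+z_\perp$ and bound the quadratic form directly via \eqref{lambdainf}. Your route is more elementary---it never constructs the full eigendecomposition---and your modular structure (prove the scalar bi-implication first, then assemble over $e$ and $\mathbf{x}$) makes the passage from the single-Laplacian fact to the joint program more transparent than the paper's somewhat compressed ``$\beta^*\ge\lambda_2$'' paragraph. The paper's spectral argument, on the other hand, makes explicit exactly which eigenvalue gap ($\lambda_i-\lambda_2\ge 0$) drives the inequality.
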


\begin{proof}
Let $v_i$ be the eigenvector associated with eigenvalue $\lambda_i$ of the Laplacian matrix $\mathbf{L}_G^e(\mathbf{x})$, for $\forall i\in V$. Since $\mathbf{L}_G^e(\mathbf{x})$ is real and symmetric, its eigenvectors can be chosen such that they are real and orthonormal, i.e., $v_i^T v_j=0, \forall i\neq j\in V$ and $v_i^T v_i=1$. Specially, we define $v_1:=\frac{\mathbf{1}}{\sqrt{n}}$, which is actually the eigenvector corresponding to $\lambda_1=0$. For convenience, we denote $\mathbf{V}:=[v_1,v_2,...,v_n]$, and $\Upsilon = diag(\lambda_1,\lambda_2,...,\lambda_n)$, where $diag$ is the diagonal operator. Thus, $\Upsilon$ is a diagonal matrix with $i$th diagonal entry $\lambda_i(\mathbf{L}_G^e(\mathbf{x}))$. Furthermore, based on the orthonormal basis, we obtain $\mathbf{VV}^T = \sum_{i=1}^n  v_i v_i^T = \mathbf{I}_{n}$.  The definition of eigenvalue yields $\mathbf{L}_G^e(\mathbf{x}) \mathbf{V} =  \mathbf{V}\Upsilon$. Multiplying  by $\mathbf{V}^T$ on the right leads to eigen-decomposition:
\begin{equation}\label{spectral}
\mathbf{L}_G^e(\mathbf{x})= \mathbf{L}_G^e(\mathbf{x}) \mathbf{VV}^T = \mathbf{V}\Upsilon \mathbf{V}^T =  \sum_{i=1}^{n}\lambda_{i}(\mathbf{L}_G^e(\mathbf{x}))v_i v_i^T.
\end{equation}
Since $\lambda_1=0$, equation \eqref{spectral} can be simplified as
\begin{equation}\label{simplespec}
\mathbf{L}_G^e(\mathbf{x})=\sum_{i=2}^{n}\lambda_{i}(\mathbf{L}_G^e(\mathbf{x}))v_i v_i^T.
\end{equation}
Next, we add $\lambda_2(\mathbf{L}_G^e(\mathbf{x}))v_1 v_1^T$ to both sides of \eqref{simplespec} and obtain
$
\mathbf{L}_G^e(\mathbf{x})+\lambda_2(\mathbf{L}_G^e(\mathbf{x}))v_1 v_1^T=\sum_{i=2}^{n}\lambda_{i}(\mathbf{L}_G^e(\mathbf{x}))v_i v_i^T+\lambda_2(\mathbf{L}_G^e(\mathbf{x}))v_1 v_1^T.
$ Note that similar to \eqref{spectral}, $ \sum_{i=2}^n (\lambda_i(\mathbf{L}_G^e(\mathbf{x})) - \lambda_2(\mathbf{L}_G^e(\mathbf{x}))) v_iv_i^T + 0\cdot v_1v_1^T= \mathbf{V}(\Upsilon-diag(0,\lambda_2,\lambda_2,...,\lambda_2))\mathbf{V}^T = \mathbf{L}_G^e(\mathbf{x}) - diag(0,\lambda_2,\lambda_2,...,\lambda_2)$, where we use $\lambda_2$ for clarity in the diagonal entries. The eigenvalues of matrix $\mathbf{L}_G^e(\mathbf{x}) - diag(0,\lambda_2,\lambda_2,...,\lambda_2)$ can be obtained straightforwardly as $\{0,0,\lambda_3-\lambda_2,...,\lambda_n-\lambda_2\}$ and thus $\mathbf{L}_G^e(\mathbf{x}) - diag(0,\lambda_2,\lambda_2,...,\lambda_2)\succeq 0$ based on property \eqref{lambdarelation}, meaning that it is positive semidefinite.
In sum, $\sum_{i=2}^n (\lambda_i(\mathbf{L}_G^e(\mathbf{x})) - \lambda_2(\mathbf{L}_G^e(\mathbf{x}))) v_iv_i^T\succeq 0$ which can be rewritten as $\mathbf{L}_G^e(\mathbf{x}) \succeq \sum_{i=2}^{n}\lambda_{2}(\mathbf{L}_G^e(\mathbf{x}))v_i v_i^T$. Then,
\begin{equation}
\begin{split}
&\mathbf{L}_G^e(\mathbf{x})+\lambda_2(\mathbf{L}_G^e(\mathbf{x}))v_1 v_1^T\succeq \sum_{i=2}^{n}\lambda_{2}(\mathbf{L}_G^e(\mathbf{x}))v_i v_i^T\\&+
\lambda_2(\mathbf{L}_G^e(\mathbf{x}))v_1 v_1^T = \lambda_{2}(\mathbf{L}_G^e(\mathbf{x}))\sum_{i=1}^{n}v_i v_i^T.
\end{split}
\end{equation}
Thus, we obtain $\mathbf{L}_G^e(\mathbf{x})\succeq \lambda_{2}(\mathbf{L}_G^e(\mathbf{x}))(\mathbf{I}_{n}-v_1 v_1^T)$, yielding
\begin{align}
\mathbf{L}_G^e(\mathbf{x})\succeq \lambda_{2}(\mathbf{L}_G^e(\mathbf{x}))(\mathbf{I}_{n}-\frac{1}{n}\mathbf{1}\mathbf{1}^T).\label{semi}
\end{align}
The above analysis is for any given attacker's strategy $e\subseteq \mathcal{A}$. Next, we show that our modified algebraic connectivity maximization problem is equivalent to $\max_{\mathbf{x},\beta}\ \beta$ in \eqref{thm1}, i.e., $\max_{\mathbf{x},\beta}\ \beta=\lambda_2(\mathbf{L}_G^{e^*}(\mathbf{x}^*))$, where $\mathbf{x}^*$ and $e^*$ are the optimal decisions. For convenience, we denote $\beta^*=\max_{\mathbf{x},\beta}\ \beta$. The proof includes two parts. First, we show that $\beta^* \geq \lambda_2(\mathbf{L}_G^{e^*}(\mathbf{x}^*))$. We aim to maximize the algebraic connectivity $\lambda_2(\mathbf{L}_G^e(\mathbf{x}))$, and $(\mathbf{x}^*,e^*)$ is a feasible solution pair. Therefore, based on \eqref{semi}, $\beta^* \geq \lambda_2(\mathbf{L}_G^{e^*}(\mathbf{x}^*))$ should hold, since $\beta$ is a free variable to optimize in the problem \eqref{thm1} while its counterpart in \eqref{semi}, $\lambda_{2}(\mathbf{L}_G^e(\mathbf{x}))$, is dependent on $\mathbf{x}$ and $e$. Second, we show that $\beta^* \leq \lambda_2(\mathbf{L}_G^{e^*}(\mathbf{x}^*))$. Since $\beta^*,e^*,\mathbf{x}^*$ are feasible, then, the constraints in \eqref{thm1} should be satisfied, i.e., ${L}_G^e(\mathbf{x}^*)\succeq \beta^*(\mathbf{I}_{n}-\frac{1}{n}\mathbf{1}\mathbf{1}^T),\ \forall e\subseteq \mathcal{A},\ |e|=\psi$, which gives
$
{L}_G^{e^*}(\mathbf{x}^*)\succeq \beta^*(\mathbf{I}_{n}-\frac{1}{n}\mathbf{1}\mathbf{1}^T).
$
Let $\mu$ be any unit vector that satisfies $\mu^T v_1=0.$ Then, we obtain
$
\mu^T\mathbf{L}_G^{e^*}(\mathbf{x}^*)\mu\geq \mu^T\beta^*(\mathbf{I}_{n}-\frac{1}{n}\mathbf{1}\mathbf{1}^T)\mu
\rightarrow\ \mu^T\mathbf{L}_G^{e^*}(\mathbf{x}^*)\mu\geq \beta^*\mu^T \mathbf{I}_{n}\mu-\beta^*\mu^Tv_1 v_1^T\mu
\rightarrow\ \mu^T\mathbf{L}_G^{e^*}(\mathbf{x}^*)\mu\geq \beta^*\mu^T \mathbf{I}_{n}\mu=\beta^*.
$
Since vector $\mu$ is not fixed, and based on \eqref{lambdainf}, we have $\lambda_2(\mathbf{L}_G^{e^*}(\mathbf{x}^*))\geq \beta^*$. Therefore,  $\max_{\mathbf{x},\beta}\beta=\lambda_2(\mathbf{L}_G^{e^*}(\mathbf{x}^*))$, and \eqref{thm12} is equivalent to \eqref{thm1}. 
\end{proof}

Next, we define a new Stackelberg game $\widetilde{\Xi}_\gamma:=\{\mathcal{N}_{\gamma},\mathbf{X}_{\gamma},\mathcal{A},\alpha_\gamma,\lambda_2\}$, for $\gamma\in\{1,2\}$, where $\mathcal{N}_\gamma$, $\mathbf{X}_\gamma$ and $\mathcal{A}$ are the same as those defined in game ${\Xi}_\gamma$; $\alpha_\gamma$ and $\lambda_2$ are the objective functions of the network designer and attacker, respectively. Based on  \eqref{discretedistance}, \eqref{weight} and Theorem \ref{Thm1}, the network designer $\gamma$'s problem is formulated as follows, for $\gamma\in\{1,2\}$:
\begin{equation*}
\begin{split}
\widetilde{\mathcal{Q}}_\gamma^k:\ \ \ &\max_{\mathbf{x}_\gamma(k+c_\gamma),\alpha_\gamma(k+c_\gamma)}\qquad \alpha_\gamma(k+c_\gamma)\\
\mathrm{s.t.}\ \ \ &\mathbf{L}_G^e(k+c_\gamma)\succeq \alpha_\gamma(k+c_\gamma) (\mathbf{I}_{n}-\frac{1}{n}{\mathbf{1}\mathbf{1}}^T),\\
&\qquad\qquad\qquad \forall e\subseteq \mathcal{A},\ |e|=\psi,\\
&2\{{x}_i(k+c_\gamma)-{x}_j(k+c_\gamma)\}^T\{x_i(k)-x_j(k)\}\\
&\qquad\qquad={\mathcal{Z}}_{ij}(k+c_\gamma)+{\mathcal{Z}}_{ij}(k),\quad \forall i,j\in V_\gamma,\\
&||x_{ij}(k+c_\gamma)||_2\ge \rho_\gamma,\ \ \ \ \ \forall (i,j)\in E_\gamma,\\
&||x_{ij}(k+c_\gamma )||_2\ge \rho_{12},\quad \forall i\in V_\gamma,\ \forall j\in V_{-\gamma},\\
&||x_{i}(k+c_\gamma)-x_{i}(k)||_2\leq  d_\gamma,\quad \forall i\in V_\gamma,\\
& x_{j}(k+c_\gamma) = x_{j}(k),\ \forall j\in V_{-\gamma}.
\end{split}
\end{equation*}
Note that Laplacian matrices $\mathbf{L}_G^e(k+c_\gamma)$, for $\gamma=1,2$, are constructed based on \eqref{weight}.

The above analysis leads to the following corollary.
\begin{corollary}\label{equi}
The Stackelberg game $\widetilde{\Xi}_\gamma$ is strategically equivalent to the game ${\Xi}_\gamma$ defined in Section \ref{gamesingames}, for $\gamma\in\{1,2\}$. The interactions between two network operators can be captured by a strategic equivalent Nash game denoted by $\widetilde{\Xi}_{I}$, where $\widetilde{\Xi}_{I}$ includes $\alpha_\gamma$, $\gamma =1,\ 2$..
\end{corollary}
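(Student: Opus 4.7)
The plan is to obtain Corollary \ref{equi} as a direct consequence of Theorem \ref{Thm1}, applied separately to each leader's subproblem, together with the observation that equations \eqref{discretedistance} and \eqref{weight} are exact reformulations rather than approximations of the underlying relations. First, I would fix $\gamma\in\{1,2\}$ and write out player $\gamma$'s original Stackelberg problem $\mathcal{Q}_\gamma^k$ from \eqref{p1}. The inner minimization over $e\subseteq\mathcal{A}$ with $|e|=\psi$ together with the outer maximization over $\mathbf{x}_\gamma(k+c_\gamma)$ has exactly the $\max\,\min$ structure of \eqref{thm12}, with the outer variable restricted to lie in the feasible set carved out by the distance, safety, and coordination constraints. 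Since Theorem \ref{Thm1} establishes that \eqref{thm12} is equivalent to the SDP-type program \eqref{thm1} through the lifting variable $\beta$, I would apply that theorem pointwise at each update step, identifying $\beta$ with the auxiliary scalar $\alpha_\gamma(k+c_\gamma)$.

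Next I would verify that the additional constraints in $\widetilde{\mathcal{Q}}_\gamma^k$ match those in $\mathcal{Q}_\gamma^k$. The distance constraints $\|x_{ij}\|_2\ge\rho_\gamma$ and $\|x_{ij}\|_2\ge\rho_{12}$, the step-size bound $\|x_i(k+c_\gamma)-x_i(k)\|_2\le d_\gamma$, and the freezing condition $x_j(k+c_\gamma)=x_j(k)$ for $j\in V_{-\gamma}$ appear verbatim in both problems, so they carry over without alteration. The only apparently new constraint is the bilinear identity relating $\mathcal{Z}_{ij}(k+c_\gamma)$ to $x_i(k),x_j(k),x_i(k+c_\gamma),x_j(k+c_\gamma)$, but this is simply the algebraic content of \eqref{discretedistance}; since the weight update \eqref{weight} is merely a consistent way of expressing $w_{ij}(k+c_\gamma)$ in terms of positions, the Laplacian $\mathbf{L}_G^e(k+c_\gamma)$ appearing in $\widetilde{\mathcal{Q}}_\gamma^k$ coincides with $\mathbf{L}_G^e(\mathbf{x}(k+c_\gamma))$ in $\mathcal{Q}_\gamma^k$. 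Hence the feasible sets are in bijection, optimal objective values coincide, and the set of optimal position profiles $\mathbf{x}_\gamma^\star(k+c_\gamma)$ for the two leader problems is identical. Since the attacker's problem $\mathcal{Q}_A^k$ and its action space $\mathcal{A}$ are unchanged in both formulations, the Stackelberg best-response correspondences match, which is precisely the strategic equivalence of $\widetilde{\Xi}_\gamma$ and $\Xi_\gamma$.

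For the second assertion, I would argue that the Nash game $\widetilde{\Xi}_I$ between the two operators is obtained by simultaneously replacing each player's payoff/constraint structure by its SDP counterpart with auxiliary scalars $\alpha_1,\alpha_2$. Because strategic equivalence at the level of each individual best response is already established in the previous step, and because the players' coupled decision variables $(\mathbf{x}_1,\mathbf{x}_2)$ enter both formulations in the same way through the global Laplacian, any strategy profile that is a Nash equilibrium of $\Xi_I$ is a Nash equilibrium of $\widetilde{\Xi}_I$ and vice versa. This gives the claim.

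The main obstacle, and essentially the only place where care is required, is to make sure that the SDP reformulation of Theorem \ref{Thm1} remains valid when the maximization is restricted to the nonconvex feasible set defined by the distance constraints and the bilinear identity. The argument in Theorem \ref{Thm1}, however, is purely pointwise in $\mathbf{x}$: for any fixed feasible $\mathbf{x}$, it shows $\max_\beta\beta=\lambda_2(\mathbf{L}_G^{e^\star}(\mathbf{x}))$ under the linear matrix inequality constraint; taking the supremum over $\mathbf{x}$ within the same feasible set on both sides preserves this equality. Thus the feasibility restrictions do not interfere, and the corollary follows.
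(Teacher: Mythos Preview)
Your overall approach matches the paper's: the corollary is stated there without proof, as an immediate consequence of Theorem~\ref{Thm1} together with the reformulations \eqref{discretedistance} and \eqref{weight}, and your proposal is the natural elaboration of that reasoning.

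There is, however, a genuine gap. You assert that \eqref{discretedistance} and \eqref{weight} are ``exact reformulations rather than approximations,'' and hence that the Laplacian $\mathbf{L}_G^e(k+c_\gamma)$ appearing in $\widetilde{\mathcal{Q}}_\gamma^k$ coincides with $\mathbf{L}_G^e(\mathbf{x}(k+c_\gamma))$ in $\mathcal{Q}_\gamma^k$. This is not so. Appendix~\ref{appDerive} derives \eqref{discretedistance} via Euler's first-order method; read literally, the identity $\|x_{ij}(k+m)\|_2^2+\|x_{ij}(k)\|_2^2=2x_{ij}(k+m)^Tx_{ij}(k)$ would force $x_{ij}(k+m)=x_{ij}(k)$, so it holds only as a first-order approximation. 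Likewise \eqref{weight} is a first-order Taylor expansion of the nonlinear map $f$, not an exact expression for $w_{ij}(k+c_\gamma)$. Consequently the Laplacian in $\widetilde{\mathcal{Q}}_\gamma^k$ is a linearized surrogate, and the two leader problems do not share identical feasible/objective pairs in the strict sense your argument needs. The corollary, as the paper presents it, should be understood as equivalence within this linearized modeling framework, with Theorem~\ref{Thm1} supplying the only nontrivial step; your argument becomes correct once the exactness claim is replaced by that reading.
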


\subsection{Equilibrium Solution Concepts}

\subsubsection{Stackelberg Equilibrium of the Adversarial Game $\widetilde\Xi_{\gamma}$}
In the Stackelberg game, the attacker's strategy is the best response to the action that network designer chooses. Recall that $\Lambda(\mathbf{x}_1,\mathbf{x}_2,e)= \lambda_2\big(\mathbf{L}^{e}_G(\mathbf{x})\big)$, and the formal definition of best response is as follows.
\begin{definition}[Best Response]
For a given strategy pair $(\mathbf{x}_1,\mathbf{x}_2)$, where $\mathbf{x}_1\in\mathbf{X}_1$ and $\mathbf{x}_2\in\mathbf{X}_2$, the best response of the attacker is defined by
$
BR(\mathbf{x}_1,\mathbf{x}_2):=\{e':\Lambda(\mathbf{x}_1,\mathbf{x}_2,e')\leq \Lambda(\mathbf{x}_1,\mathbf{x}_2,e),
 \forall e,e'\subseteq\mathcal{A},\ |e'|=|e|=\psi\}.
$
\end{definition}

Thus, we give the definition of the Stackelberg equilibrium of game $\widetilde\Xi_{\gamma}$, for $\gamma\in\{1,2\}$.

\begin{definition}[Stackelberg Equilibrium]
For a given $\mathbf{x}_{-\gamma}\in\mathbf{X}_{-\gamma}$, the profile $(\mathbf{x}_\gamma^*,e^*)$ constitutes a Stackelberg equilibrium of the adversarial game $\widetilde\Xi_{\gamma}$, for $\gamma\in\{1,2\}$, if the following conditions are satisfied:
\begin{enumerate}
\item Attacker's strategy $e^*\subseteq\mathcal{A}$, where $|e^*|=\psi$, is a best response to $(\mathbf{x}_\gamma^*,\mathbf{x}_{-\gamma})$, i.e.,
$e^*\in BR(\mathbf{x}_\gamma^*,\mathbf{x}_{-\gamma}).$
\item Network designer $\gamma$'s strategy $\mathbf{x}_\gamma^*\in\mathbf{X}_\gamma$ satisfies
\begin{align*}
&\min_{e\in BR(\mathbf{x}_\gamma^*,\mathbf{x}_{-\gamma})}\Lambda(\mathbf{x}_\gamma^*,\mathbf{x}_{-\gamma},e)=\\
&\max_{\mathbf{x}_\gamma\in\mathbf{X}_\gamma}\ \min_{e\in BR(\mathbf{x}_\gamma,\mathbf{x}_{-\gamma})} \Lambda(\mathbf{x}_\gamma,\mathbf{x}_{-\gamma},e) \triangleq \Lambda^{\gamma*},
\end{align*}
where $\Lambda^{\gamma*}$ is the Stackelberg utility of the designer $\gamma$.
\end{enumerate}
\end{definition}

\subsubsection{Nash Equilibrium of the MAS Network Formation Game $\widetilde\Xi_{I}$}
After $P_1$ takes his action at step $k$, $G_1$ and $G_{12}$ are reconfigured, where $G_{12}$ is the network between $G_1$ and $G_2$. We denote network $G_1$ and $G_{12}$ at stage $k$ as $G_{1,k}$ and $G_{12,k}$, respectively. For simplicity, we further define $\widetilde G_{12,k}:=G_{1,k} \cup G_{12,k}$, which is a shorthand notation for the merged network. Then, network $G_k$ can be expressed as $G_k= \widetilde G_{12,k}\ \cup G_{2,k}$. Similarly, after $P_2$ updates network $G_2$ at step $k$, the whole network $G_k$ becomes $ G_k=\widetilde G_{21,k}\ \cup G_{1,k}$, where $\widetilde G_{21,k}:=G_{2,k} \cup G_{12,k}$. Then, the formal definition of Nash equilibrium (NE) which depends on the \textit{position} of robots is as follows.

\begin{definition}[Nash Equilibrium]\label{NE_definition}
The Nash equilibrium solution to game $\widetilde\Xi_{I}$ is a strategy profile $\mathbf{x}^*$, where $\mathbf{x}^*=(\mathbf{x}_1^*,{\mathbf{x}_2^*})\in{\mathbf{X}}$, that satisfies
\begin{equation*}\label{NE_discrete}
\begin{split}
\lambda_2\big(\mathbf{L}_{ G_{k}}(\mathbf{x}_1^*,\mathbf{x}_2^*) \big)\ge \lambda_2\big(\mathbf{L}_{G_{k}}(\mathbf{x}_1,\mathbf{x}_2^*) \big),\\
\lambda_2\big(\mathbf{L}_{ G_{k}}(\mathbf{x}_1^*,\mathbf{x}_2^*) \big)\ge \lambda_2\big(\mathbf{L}_{G_{k}}(\mathbf{x}_1^*,\mathbf{x}_2) \big),
\end{split}
\end{equation*}
for $\forall \mathbf{x}_1\in \mathbf{X}_1$ and $\forall \mathbf{x}_2 \in \mathbf{X}_2$, where $k$ denotes the time step.
\end{definition}
Note that $\mathbf{L}_{ G_{k}}$ in Definition \ref{NE_discrete} captures the network characteristic under all possible attacks instead of a particular one. At the NE point, no player can individually increase the global network connectivity by reconfiguring their MAS network.

\subsubsection{Meta-Equilibrium of the Games-in-Games} 
To design a secure multi-layer MAS, each network operator should take into account the attacker's behavior and the other network operator's strategy. The Nash game and the Stackelberg game are inherently coupled, as the adversarial consideration naturally affects the operators' decisions in the Nash game. Furthermore, the NE (Definition \ref{NE_definition}) alone cannot fully capture all the required elements in the games-in-games framework, as the adversarial strategy is characterized by the Stackelberg equilibrium.  Hence, a holistic solution concept is necessary for our composed game which is presented as follows.
\begin{definition}[Meta-Equilibrium]
The meta-equilibrium of the multi-layer MAS network formation game is captured by the profile $(\mathbf{x}_1^*,\mathbf{x}_2^*,e^*)$ which satisfies the following conditions:
\begin{enumerate}
\item $(\mathbf{x}_\gamma^*,e^*)$ constitutes a Stackelberg equilibrium of game $\widetilde\Xi_{\gamma}$, for $\gamma = 1,2$.
\item $\mathbf{x}^*=(\mathbf{x}_1^*,{\mathbf{x}_2^*})$ is an NE of game $\widetilde\Xi_{I}$.
\end{enumerate}
\end{definition}

Note that the meta-equilibrium is a solution concept for the composed game (games-in-games), which captures the incoordination between two network operators (Nash game) and the security consideration of each network operator (Stackelberg game) simultaneously.

\section{SDP-Based Approach and Iterative Algorithm}\label{s3}

In this section, we reformulate the network designer's problem as a semidefinite programming (SDP) and design an algorithm to compute the meta-equilibrium of the  MAS network formation game.

\subsection{SDP Reformulation}
Notice that in $\widetilde{\mathcal{Q}}_\gamma^k$, the minimum distance constraints $||x_{ij}(k+c_\gamma)||_2\geq \rho_\gamma,\ \forall (i,j)\in E_\gamma$, are \textit{nonconvex}. To address this issue, we regard $\mathcal{Z}_{ij}(k+c_\gamma)$ as a new decision variable. Based on the definition $\mathcal{Z}_{ij}(t):=||x_{ij}(t)||_2^2$, we have $||x_{ij}(k+c_\gamma)||_2^2=\mathcal{Z}_{ij}(k+c_\gamma)$. Note that the Laplacian matrix $\mathbf{L}_G^e(k+c_\gamma)$ depends linearly on $\mathcal{Z}_{ij}(k+c_\gamma)$, $i,j\in V$,  based on \eqref{weight}. Then, we solve problems $\widetilde{\mathcal{Q}}_\gamma^k$ with respect to unknowns $\mathcal{Z}_{ij}(k+c_\gamma)$ and $\mathbf{x}(k+c_\gamma)$ jointly. In this way, $\widetilde{\mathcal{Q}}_\gamma^k$ becomes a convex problem. However, due to the coupling between the robots position and the distance vectors, solving $\widetilde{\mathcal{Q}}_\gamma^k$ via merely adding new variables yields inconsistency between the obtained solutions $\mathbf{x}(k+c_\gamma)$ and $\mathcal{Z}_{ij}(k+c_\gamma),\ \forall i,j\in V$. To address this issue, we first present the following definition.
\begin{definition}[Euclidean Distance Matrix]
Given the positions of a set of $n$ points denoted by $\{x_1,...,x_n\}$, the Euclidean distance matrix representing the points spacing is defined as $$\mathbf{D}:=[d_{ij}]_{i,j=1,...,n},\ \mathrm{where}\ d_{ij}=||x_i-x_j||_2^2.$$
\end{definition}
A critical property of the Euclidean distance matrix is summarized in the following theorem.

\begin{theorem}[\cite{dattorro2010convex}]\label{EDM}
A matrix $\mathbf{D}=[d_{ij}]_{i,j=1,...,n}$ is an Euclidean distance matrix if and only if 
\begin{equation}\label{sdp_EDM}
-\mathbf{CDC}\succeq 0,\ \mathrm{and}\ d_{ii}=0,\ i=1,...,n,
\end{equation}
where $\mathbf{C}:=\mathbf{I}_n - \frac{1}{n}\mathbf{11}^T$.
\end{theorem}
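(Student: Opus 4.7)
The plan is to prove both directions by exploiting the fact that $\mathbf{C}:=\mathbf{I}_n-\tfrac{1}{n}\mathbf{1}\mathbf{1}^T$ is the orthogonal projector onto $\mathbf{1}^\perp$, so $\mathbf{C}\mathbf{1}=0$, $\mathbf{1}^T\mathbf{C}=0$, and $\mathbf{C}$ is symmetric idempotent. The organizing identity is that any Euclidean squared-distance matrix admits the decomposition $\mathbf{D}=\mathbf{1}\mathbf{b}^T+\mathbf{b}\mathbf{1}^T-2\mathbf{X}^T\mathbf{X}$, where $\mathbf{X}=[x_1,\ldots,x_n]$ stacks the points as columns and $\mathbf{b}\in\mathbb{R}^n$ has entries $b_i=\|x_i\|_2^2$. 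Double-centering by $\mathbf{C}$ then annihilates the two rank-one terms and converts the EDM question into a positive-semidefiniteness question about a Gram matrix.

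For the forward direction, assume $\mathbf{D}$ is an EDM. The condition $d_{ii}=\|x_i-x_i\|_2^2=0$ is immediate. Pre- and post-multiplying $\mathbf{D}=\mathbf{1}\mathbf{b}^T+\mathbf{b}\mathbf{1}^T-2\mathbf{X}^T\mathbf{X}$ by $\mathbf{C}$ kills the $\mathbf{1}\mathbf{b}^T$ and $\mathbf{b}\mathbf{1}^T$ terms, yielding $\mathbf{C}\mathbf{D}\mathbf{C}=-2(\mathbf{X}\mathbf{C})^T(\mathbf{X}\mathbf{C})$, hence $-\mathbf{C}\mathbf{D}\mathbf{C}\succeq 0$. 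For the converse, assume $-\mathbf{C}\mathbf{D}\mathbf{C}\succeq 0$ and $d_{ii}=0$, and set $\mathbf{G}:=-\tfrac{1}{2}\mathbf{C}\mathbf{D}\mathbf{C}$. Since $\mathbf{G}\succeq 0$, it factors as $\mathbf{G}=\mathbf{X}^T\mathbf{X}$ for some $\mathbf{X}$; let $x_i$ be its $i$-th column. I then verify $\|x_i-x_j\|_2^2=\mathbf{G}_{ii}+\mathbf{G}_{jj}-2\mathbf{G}_{ij}=d_{ij}$ by expanding the entries of $\mathbf{C}\mathbf{D}\mathbf{C}$ componentwise as $d_{ij}-\tfrac{1}{n}\sum_k d_{ik}-\tfrac{1}{n}\sum_k d_{kj}+\tfrac{1}{n^2}\sum_{k,\ell}d_{k\ell}$ and showing that, after substitution into $\mathbf{G}_{ii}+\mathbf{G}_{jj}-2\mathbf{G}_{ij}$, every row-average, column-average, and grand-average term cancels, leaving exactly $d_{ij}$.

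The main obstacle is that arithmetic cancellation in the converse: one has to be careful, using $d_{ii}=0$ and the symmetry $d_{ik}=d_{ki}$, to check that the six averaging terms coming from $[\mathbf{C}\mathbf{D}\mathbf{C}]_{ii}$, $[\mathbf{C}\mathbf{D}\mathbf{C}]_{jj}$, and $-2[\mathbf{C}\mathbf{D}\mathbf{C}]_{ij}$ combine telescopically to zero, leaving only the $-2d_{ij}$ contribution. I would also note up front that EDMs are taken to be symmetric with zero diagonal, and that the stated SDP condition $-\mathbf{C}\mathbf{D}\mathbf{C}\succeq 0$ together with $d_{ii}=0$ implicitly enforces this symmetry, so no separate hypothesis is required. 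The embedding dimension produced by the converse equals $\mathrm{rank}(\mathbf{G})$, which is automatically at most $n-1$ since $\mathbf{G}\mathbf{1}=0$.
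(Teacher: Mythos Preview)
Your argument is correct and is the classical Schoenberg-type proof of this characterization. Note, however, that the paper does not supply its own proof of this theorem: it is quoted directly from the cited reference and used as a black box in the SDP reformulation, so there is no ``paper's proof'' to compare against. Your write-up therefore goes beyond what the paper itself contains.

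One small caveat: your remark that ``the stated SDP condition $-\mathbf{C}\mathbf{D}\mathbf{C}\succeq 0$ together with $d_{ii}=0$ implicitly enforces this symmetry'' is not quite accurate. Interpreting $-\mathbf{C}\mathbf{D}\mathbf{C}\succeq 0$ as symmetric positive semidefiniteness only forces $\mathbf{C}(\mathbf{D}-\mathbf{D}^T)\mathbf{C}=0$, which does not by itself give $\mathbf{D}=\mathbf{D}^T$ (the centering matrix $\mathbf{C}$ has a nontrivial kernel spanned by $\mathbf{1}$). In the standard formulation, and in the cited reference, symmetry of $\mathbf{D}$ is part of the hypothesis rather than a consequence. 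Your cancellation computation in the converse direction already tacitly uses $d_{ik}=d_{ki}$ when you identify row and column averages, so it is cleanest simply to state symmetry as a standing assumption on $\mathbf{D}$.
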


Note that \eqref{sdp_EDM} is a necessary and sufficient condition that ensures $\mathbf{D}$ an  Euclidean distance matrix. In addition, the inequality and equality in \eqref{sdp_EDM} are both convex. Therefore, Theorem \ref{EDM} provides an approach to avoid the inconsistency between the robots position and distance vectors when they are treated as independent variables. In specific, denote $\mathbf{Z}=[\mathcal{Z}_{ij}]_{i,j\in V}$, $\mathbf{C}=\mathbf{I}_{n} - \frac{1}{n}\mathbf{11}^T$, and we can further reformulate problems $\widetilde{\mathcal{Q}}_\gamma^k$, $\gamma\in\{1,2\}$, as
\begin{equation}\label{ap2}
\begin{split}
\overline{\mathcal{Q}}_\gamma^k:\ \ \ &\max_{\mathbf{x}_\gamma(k+c_\gamma),\mathbf{Z}(k+c_\gamma),\alpha_\gamma(k+c_\gamma)}\quad \alpha_\gamma(k+c_\gamma)\\
\mathrm{s.t.}\ \ \ &\mathbf{L}_G^e(k+c_\gamma)\succeq \alpha_\gamma(k+c_\gamma) \mathbf{C},\quad \forall e\subseteq \mathcal{A},\ |e|=\psi,\\
&2\{{x}_i(k+c_\gamma)-{x}_j(k+c_\gamma)\}^T\{x_i(k)-x_j(k)\}\\
&\qquad\qquad={\mathcal{Z}}_{ij}(k+c_\gamma)+{\mathcal{Z}}_{ij}(k),\quad \forall i,j\in V_\gamma,\\
&\mathcal{Z}_{ij}(k+c_\gamma)\ge \rho_\gamma^2,\ \forall (i,j)\in E_\gamma,\\
&\mathcal{Z}_{ij}(k+c_\gamma)\ge \rho_{12}^2,\quad \forall i\in V_\gamma,\ \forall j\in V_{-\gamma},\\
& -\mathbf{CZ}(k+c_\gamma)\mathbf{C}\succeq 0,\ \mathcal{Z}_{ii}(k+c_\gamma)=0,\ i\in V,\\
&||x_{i}(k+c_\gamma)-x_{i}(k)||_2\leq  d_\gamma,\ \forall i\in V_\gamma,\\
& x_{j}(k+c_\gamma) = x_{j}(k),\ \forall j\in V_{-\gamma}.
\end{split}
\end{equation}

\textit{Remark}: In $\overline{\mathcal{Q}}_\gamma^k$, the relationship $||x_{ij}(k+c_\gamma)||_2^2=\mathcal{Z}_{ij}(k+c_\gamma)$ is ensured by the constraint $2\{{x}_i(k+c_\gamma)-{x}_j(k+c_\gamma)\}^T\{x_i(k)-x_j(k)\}={\mathcal{Z}}_{ij}(k+c_\gamma)+{\mathcal{Z}}_{ij}(k)$. In this constraint, ${\mathcal{Z}}_{ij}(k)$ is known which can be calculated based on the current position $\mathbf{x}(k)$, i.e., $\mathcal{Z}_{ij}(k) =
||x_{i}(k) - x_{j}(k)||_2^{2}$. Furthermore, constraints $-\mathbf{CZ}(k+c_\gamma)\mathbf{C}\succeq 0$ and $\mathcal{Z}_{ii}(k+c_\gamma)=0$ guarantee that the elements in $\mathbf{Z}(k+c_\gamma)$ are equal to the distances between corresponding nodes.

Note that $P_\gamma$ controls robots in $G_\gamma$ and reconfigures the network by solving $\overline{\mathcal{Q}}_\gamma^k$ to obtain the new positions of robots for $\gamma\in\{1,2\}$.  Furthermore, $\overline{\mathcal{Q}}_\gamma^k$ becomes convex and admits an SDP formulation which can be solved efficiently.

\subsection{Iterative Algorithm}\label{iterative_algorithm}
We have obtained the SDP formulations $\overline Q_\gamma^k$, $\gamma=1,2$, and next we aim to find the solution that results in a meta-equilibrium MAS configuration. In the MAS formation game, $P_1$ controls robots in $G_1$ and reconfigures the network by solving the optimization problem $\overline Q_1^k$ to obtain a new position of each robot. $P_2$  controls robots in network $G_2$ in a similar way by solving $\overline Q_2^k$. Note that the players' action at the current step can be seen as a best-response to the network at the previous step by taking the worst-case attack into account.  
Since both players maximize the global network connectivity at every update step, then one approach to find the meta-equilibrium solution is to address $\overline Q_1^k$ and $\overline Q_2^k$ iteratively by two players until the yielding MAS possesses the same secure topology, i.e., $P_1$ and $P_2$ cannot increase the network connectivity further through reallocating their robots. For clarity, Algorithm \ref{algorithm3} shows the updating details.
A typical example of the algorithm is \textit{alternating update} in which $P_1$ and $P_2$ have the same update frequency but not update at the same time and reconfigure the MAS network sequentially. 
 
\subsection{Structural Results}
Regarding the feasibility of the problems $\overline{\mathcal{Q}}_\gamma^k$ for $\gamma=1,2$, we have the following remark.

\textit{Remark:}
For a given multi-layer MAS network where the distance between robots satisfies the predefined minimum distance constraint, problems $\overline{\mathcal{Q}}_1^k$ and $\overline{\mathcal{Q}}_2^k$ are always feasible.
The feasibility can be indeed achieved by the players' strategies  that they do not update the position of robots at step $k$.

\begin{algorithm}[!t]
\caption{Secure and resilient MAS  network formation}\label{algorithm3}
\begin{algorithmic}[1]
\State Initialize mobile robots' position $x_i(0),\ \forall i\in V$, $\mathbf{x}(1)=2\mathbf{x}(1-c_1)=2\mathbf{x}(1-c_2)$, $c_1$, $c_2$, $\kappa=10^{-6}$.
\For {$k=1,2,3,...$}
\If{$k\mod c_1=0$ and $\lVert \mathbf{x}(k) - \mathbf{x}(k-c_1) \rVert_\infty > \kappa$}
\State $P_1$ obtains new strategy $\mathbf{x}_1(k+c_1)$ via solving $\overline{\mathcal{Q}}_1^k$ \Else\ $\mathbf{x}_1(k) = \mathbf{x}_1(k-1)$
\EndIf
\If {$k\mod c_2=0$ and $\lVert \mathbf{x}(k) - \mathbf{x}(k-c_2) \rVert_\infty > \kappa$}
\State $P_2$ obtains new strategy $\mathbf{x}_2(k+c_2)$ via solving $\overline{\mathcal{Q}}_2^k$
\Else\  $\mathbf{x}_2(k) = \mathbf{x}_2(k-1)$
\EndIf
\State \textbf{Break} if $\lVert \mathbf{x}(k) - \mathbf{x}(k-c_1) \rVert_\infty \leq \kappa$ and $\lVert \mathbf{x}(k) - \mathbf{x}(k-c_2) \rVert_\infty \leq \kappa$ and $k>\max(c_1,c_2)$
\State $k\gets k+1$
\EndFor
\State \textbf{return} $\mathbf{x}(k)$
\end{algorithmic}
\end{algorithm}

When $\overline{\mathcal{Q}}_1^k$ and $\overline{\mathcal{Q}}_2^k$ are feasible at each update step, another critical property is the convergence of the proposed iterative algorithm. The result is summarized in Theorem \ref{convergence}.

\begin{theorem}\label{convergence}
The proposed Algorithm \ref{algorithm3} of the adversarial network formation game converges to a meta-equilibrium asymptotically.
\end{theorem}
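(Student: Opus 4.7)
The plan is to establish convergence by showing that Algorithm \ref{algorithm3} produces a monotonically nondecreasing, bounded sequence of worst-case connectivity values, and then argue that any limit point of the corresponding configuration sequence satisfies the meta-equilibrium conditions. I would introduce the value sequence $V_k := \min_{e\subseteq\mathcal{A},|e|=\psi}\lambda_2\bigl(\mathbf{L}_G^{e}(\mathbf{x}(k))\bigr)$ and repeatedly invoke Theorem \ref{Thm1} to identify $V_{k+c_\gamma}$ with the optimal objective $\alpha_\gamma^*$ of the SDP $\overline{\mathcal{Q}}_\gamma^{k}$. The overall structure mirrors the classical proof of convergence of Gauss--Seidel block coordinate ascent on a common, bounded payoff.

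The first key step is monotonicity. When $P_\gamma$ solves $\overline{\mathcal{Q}}_\gamma^{k}$, the ``stay put'' assignment $\mathbf{x}_\gamma(k+c_\gamma)=\mathbf{x}_\gamma(k)$ together with $\mathcal{Z}_{ij}(k+c_\gamma)=\mathcal{Z}_{ij}(k)$ is feasible (exactly the observation stated in the remark preceding the theorem), and it attains the objective value $V_k$. Hence the optimizer attains a value no smaller than $V_k$. Since between updates the non-active player's coordinates are held fixed by the constraint $x_j(k+c_\gamma)=x_j(k)$ for $j\in V_{-\gamma}$, a round-robin (or any staggered) update immediately yields $V_{k+1}\ge V_k$. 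For the case in which both frequencies coincide, one either analyzes updates in the alternating regime described in Section \ref{iterative_algorithm} or refines the schedule so that at most one operator moves per iterate; this preserves the monotonicity argument.

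The second step is boundedness. The feasibility constraints $\|x_i(k+c_\gamma)-x_i(k)\|_2\le d_\gamma$ together with the fixed initial condition confine all robots to a compact subset of $\mathbb{R}^{3n}$. The weight function $f$ is continuous, so entries of $\mathbf{L}_G^{e}(\mathbf{x})$ are uniformly bounded, and therefore $\lambda_2$ is uniformly bounded above (e.g.\ by $n\max_{ij}w_{ij}$). The monotone convergence theorem then gives $V_k\nearrow V^{*}$ for some $V^{*}\in\mathbb{R}_+$, and by compactness the position sequence $\{\mathbf{x}(k)\}$ admits a convergent subsequence $\mathbf{x}(k_j)\to\mathbf{x}^{*}=(\mathbf{x}_1^{*},\mathbf{x}_2^{*})$.

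The last step is to argue that $\mathbf{x}^{*}$, together with the worst-case attack $e^{*}\in BR(\mathbf{x}_1^{*},\mathbf{x}_2^{*})$, forms a meta-equilibrium. The Stackelberg condition is automatic: the ``$\forall e\subseteq\mathcal{A},\,|e|=\psi$'' family of LMIs in $\overline{\mathcal{Q}}_\gamma^{k}$ enforces that $\alpha_\gamma^{*}$ equals the attacker's best-response payoff, so any accumulation point of the resulting sequence inherits this property. For the NE condition of $\widetilde{\Xi}_I$, I would argue by contradiction: suppose some $\gamma$ had a strictly profitable unilateral deviation at $\mathbf{x}^{*}$, improving the worst-case connectivity by $\epsilon>0$. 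By continuity of $f$, of the eigenvalue map $\lambda_2$, and of the SDP data in the position variable, for all sufficiently large $k_j$ the player $\gamma$ would be able to pick, at his next update, a nearby feasible deviation achieving a value at least $V^{*}+\epsilon/2$, pushing $V_k$ strictly above $V^{*}$ and contradicting $V_k\nearrow V^{*}$. I expect this continuity/feasibility-transfer step to be the main obstacle, because the per-step trust-region constraint $\|x_i(k+c_\gamma)-x_i(k)\|_2\le d_\gamma$ couples the feasible set to the current iterate, so the hypothetical deviation at $\mathbf{x}^{*}$ must be realized by a sequence of feasible deviations along $\mathbf{x}(k_j)$; handling this requires a careful continuity argument for the set-valued map $\mathbf{x}(k)\mapsto\mathrm{feas}(\overline{\mathcal{Q}}_\gamma^{k})$, together with an appeal to the compactness already established. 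Once this is in place, the three conditions of the meta-equilibrium definition are all verified at $(\mathbf{x}_1^{*},\mathbf{x}_2^{*},e^{*})$, completing the proof.
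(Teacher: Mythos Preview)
Your argument follows the same skeleton as the paper's proof: monotonicity of the worst-case connectivity sequence (the paper phrases this via the constrained \emph{potential game} structure, which is exactly your ``stay put is feasible'' observation), an upper bound on $\lambda_2$, and the monotone convergence theorem. In that sense the approaches coincide.

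Where you go beyond the paper is in the final step: the paper simply asserts that the value-sequence convergence delivered by the monotone convergence theorem implies convergence to a meta-equilibrium, without verifying the Nash/Stackelberg conditions at the limit. Your contradiction argument via continuity of the SDP data is a genuine addition and is the right instinct. However, one premise in your boundedness step is false: the per-step trust-region constraints $\|x_i(k+c_\gamma)-x_i(k)\|_2\le d_\gamma$ together with a fixed initial condition do \emph{not} confine the iterates to a compact subset of $\mathbb{R}^{3n}$, since after $k$ steps a robot may have drifted by as much as $k\,d_\gamma$. This does not affect the convergence of $V_k$ (the bound on $\lambda_2$ holds regardless of position, e.g.\ via $\lambda_2\le n\max_{ij}w_{ij}$), but it breaks your extraction of a convergent subsequence $\mathbf{x}(k_j)\to\mathbf{x}^*$, and with it the subsequent contradiction argument. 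To repair this you would need an independent reason why the position sequence stays bounded (or is Cauchy), which neither your proposal nor the paper supplies.
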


\begin{proof}
Note that at each stage of play, the game captured by $\mathcal{\bar Q}_1^k$ and $\mathcal{\bar Q}_2^k$ can be characterized as a constrained \textit{potential game} due to the identical objective of two players \cite{monderer1996potential}. The designed algorithm is based on the best response dynamics  which yields a non-decreasing network connectivity sequence. Furthermore, for a network with $n$ nodes, its algebraic connectivity is upper bounded by a value depending on $f(d_\gamma)$  \cite{godsil2013algebraic}. Thus, based on the monotone convergence theorem \cite{ganter2012formal}, the algorithm converges to a meta-equilibrium asymptotically.
\end{proof}

We next show that designers' adversary-anticipation is beneficial to the network performance and the multi-layer MAS network is resistant to strategic attacks.
\begin{lemma}
Under the established games-in-games model in Section \ref{gamesingames}, we obtain $\lambda_2\big(\mathbf{L}^{e_k}_{ G_{k}}(\mathbf{x}_1(k),\mathbf{x}_2(k)) \big)\geq \lambda_2\big(\mathbf{L}^{\tilde{e}_k}_{ G_{k}}(\tilde{\mathbf{x}}_1(k),\tilde{\mathbf{x}}_2(k)) \big)$, where $(\mathbf{x}_1(k),\mathbf{x}_2(k))$ is a strategy pair of network operators resulting from \eqref{ap2} with an appropriate time index; $(\tilde{\mathbf{x}}_1(k),\tilde{\mathbf{x}}_2(k))$ is a corresponding strategy pair without adversary-anticipation (ignoring the first constraint in \eqref{ap2}); and $e_k$ and $\tilde{e}_k$ are the attacker's admissible strategies at step $k$ satisfying $e_k\in BR(\mathbf{x}_1(k),\mathbf{x}_2(k))$ and $\tilde{e}_k\in BR(\tilde{\mathbf{x}}_1(k),\tilde{\mathbf{x}}_2(k))$. The inequality also holds at the meta-equilibrium $(\mathbf{x}^*,e^*)$.
\end{lemma}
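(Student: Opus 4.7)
The plan is to reduce the statement to a one-step optimality comparison and then lift it to the iterative trajectory and to the meta-equilibrium. The first step is to observe that for any configuration $\mathbf{y}$ and any $e_{\mathbf{y}}\in BR(\mathbf{y})$, the definition of best response yields $\lambda_2(\mathbf{L}^{e_{\mathbf{y}}}_G(\mathbf{y}))=\min_{|e|=\psi}\lambda_2(\mathbf{L}^e_G(\mathbf{y}))$. Applying this identity to both $(\mathbf{x}_1(k),\mathbf{x}_2(k))$ with attack $e_k$ and to $(\tilde{\mathbf{x}}_1(k),\tilde{\mathbf{x}}_2(k))$ with attack $\tilde{e}_k$ rewrites the target inequality purely in terms of worst-case algebraic connectivities, eliminating the explicit attack variables and leaving $\min_{e}\lambda_2(\mathbf{L}^e_G(\mathbf{x}(k)))\geq\min_{e}\lambda_2(\mathbf{L}^e_G(\tilde{\mathbf{x}}(k)))$ to prove.

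Next, I would compare the two single-step programs directly. By Theorem \ref{Thm1}, the first constraint of $\overline{\mathcal{Q}}_\gamma^k$ together with the maximization of $\alpha_\gamma$ is exactly $\max_{\mathbf{x}_\gamma}\min_{|e|=\psi}\lambda_2(\mathbf{L}^e_G(\mathbf{x}_\gamma,\mathbf{x}_{-\gamma}))$ over the feasibility region determined by the remaining (non-adversarial) constraints. Dropping that first constraint, as the non-anticipating operator does, replaces the objective by the unperturbed $\lambda_2(\mathbf{L}_G)$ while keeping exactly the same feasibility region. From an identical current configuration, both $\mathbf{x}_\gamma(k)$ and $\tilde{\mathbf{x}}_\gamma(k)$ therefore belong to the same feasible set, and $\mathbf{x}_\gamma(k)$ is by construction the argmax of the worst-case Fiedler value on it. This yields the pointwise inequality for the single-step, common-initialization case.

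To lift the inequality to the iterative process and to the meta-equilibrium, I would combine an inductive step with the monotone convergence of Algorithm \ref{algorithm3} established in Theorem \ref{convergence}. Round-robin application of the one-step comparison preserves the ordering along the trajectory, and passing to the limit gives the meta-equilibrium claim: the Stackelberg condition characterizes $\mathbf{x}^{*}_\gamma$ as the worst-case-connectivity maximizer against $\mathbf{x}^{*}_{-\gamma}$, whereas any non-anticipating fixed point $\tilde{\mathbf{x}}^{*}_\gamma$ is merely a feasible competitor and cannot outperform it.

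The main obstacle I expect is that the feasibility set of $\overline{\mathcal{Q}}_\gamma^k$ depends on the \emph{current} positions through the motion-budget constraint $\|x_i(k+c_\gamma)-x_i(k)\|_2\leq d_\gamma$, so once the two trajectories diverge their step-$k$ feasibility sets are no longer identical, and a naive pointwise comparison breaks down. The cleanest resolution is to establish the pointwise inequality under a common current configuration (which holds trivially at initialization and, as a fixed point, at the meta-equilibrium), and then carry the inequality through the iteration by an inductive alignment argument that at each update the anticipating operator's move is still the max--min optimum over its own feasible set, so its worst-case Fiedler value cannot fall below that of the non-anticipating operator, who optimizes the wrong objective.
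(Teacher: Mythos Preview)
Your approach is essentially the same as the paper's: both arguments rest on the fact that the anticipating strategy $\mathbf{x}_\gamma(k)$ is, by construction, the maximizer of $\min_{|e|=\psi}\lambda_2(\mathbf{L}_G^e(\cdot))$ over the feasible set of $\overline{\mathcal{Q}}_\gamma^k$, while the non-anticipating $\tilde{\mathbf{x}}_\gamma(k)$ is merely another feasible point in that set (when both depart from the same current configuration). The paper splits the argument into a ``during updates'' part and a meta-equilibrium part, just as you do, and the meta-equilibrium case is handled identically by invoking the Stackelberg optimality of $\mathbf{x}_\gamma^*$.

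Your presentation is in fact cleaner in one respect: the explicit reduction via the best-response identity $\lambda_2(\mathbf{L}_G^{e_{\mathbf{y}}}(\mathbf{y}))=\min_{|e|=\psi}\lambda_2(\mathbf{L}_G^e(\mathbf{y}))$ makes transparent that the claim is purely about the worst-case value, which the paper leaves implicit. You are also more candid than the paper about the obstacle: once the anticipating and non-anticipating trajectories diverge, their motion-budget constraints define different feasible sets, and the single-step argmax comparison no longer applies directly. The paper's proof does not address this; it simply asserts optimality of $(\mathbf{x}_1(k),\mathbf{x}_2(k))$ at each step without comparing feasible sets across trajectories. The natural reading of the lemma---and the one both you and the paper effectively adopt---is the single-step comparison from a common current state (hence ``corresponding'') together with the fixed-point case, where the obstacle does not arise. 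Your inductive ``alignment'' sketch for the divergent-trajectory case is not fully justified and would need additional structure to make rigorous, but the paper does not supply that structure either.
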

\begin{proof}
We prove the result consecutively in two parts, i.e.,  before and after the network reaching meta-equilibrium. For the first part, the network configuration at every time step $k$ incorporates the consideration of $|e|=\psi$ link attacks and the physical constraints, i.e., the maximum distance that robot can move at one time step and the minimum distances between robots. Assume that designer $\gamma$, $\gamma\in\{1,2\}$, updates the configuration and the position of robots becomes $\mathbf{x}_{\gamma}(k)$ at time $k$. Then, the updated position $\mathbf{x}_{\gamma}(k)$ of player $\gamma$ is a best response to $\mathbf{x}_{-\gamma}(k)$ with the consideration of constraints (ones in \eqref{ap2} with different time index), and $(\mathbf{x}_1(k),\mathbf{x}_2(k))$ is an optimal strategy maximizing the connectivity under any $\psi$ link attacks. Therefore, the global MAS network is resistant to this type of strategic attacks during updates. We next proceed to show the second part. Through the designed algorithm, the network has been shown to converge to a meta-equilibrium. The network connectivity at the meta-equilibrium $(\mathbf{x}_1^*,\mathbf{x}_2^*)$ under attack $e^*\in BR(\mathbf{x}_1^*,\mathbf{x}_2^*)$ is $\lambda_2\big(\mathbf{L}^{e^*}_G(\mathbf{x}^*)\big)$. Since two designers have the same objective, at their equilibrium strategies $(\mathbf{x}_1^*,\mathbf{x}_2^*)$, both designers have a consistent anticipation on the set of link removals under the worst-case attack, i.e., $e^*$.   By the definition of Stackelberg game, we know that $\mathbf{x}_{\gamma}^*$ is the optimal strategy under any $\psi$ link removals for designer $\gamma$, and this fact holds simultaneously for both designers at equilibrium. Therefore, $\lambda_2\big(\mathbf{L}^{e^*}_G(\mathbf{x}^*)\big)$ is the maximum network connectivity that designers can achieve under strategic attacks, showing the resistance of the network to adversary at equilibrium. 
\end{proof}

We next investigate the uniqueness of the meta-equilibrium of the game, and the result is shown in the following theorem.
\begin{theorem}
The meta-equilibrium of the game is not unique, i.e., different equilibrium profiles $(\mathbf{x}_1^*,\mathbf{x}_2^*,e^*)$ are possible.
\end{theorem}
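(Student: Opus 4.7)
The plan is to establish non-uniqueness by showing that any meta-equilibrium guaranteed to exist by Theorem \ref{convergence} generates a whole family of distinct meta-equilibria under rigid Euclidean motions, exploiting the fact that the entire game is formulated in terms of relative distances between robots.

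First, I would fix an arbitrary meta-equilibrium $(\mathbf{x}_1^*,\mathbf{x}_2^*,e^*)$ and consider the transformed profile $(\tilde{\mathbf{x}}_1^*,\tilde{\mathbf{x}}_2^*,e^*):=(T\mathbf{x}_1^*,T\mathbf{x}_2^*,e^*)$, where $T$ is a non-trivial rigid motion of $\mathbb{R}^3$ (for instance a global translation or rotation applied componentwise to every robot). Since the weight function $w_{ij}=f(\|x_{ij}\|_2^2)$ depends only on pairwise distances and every rigid motion preserves distances, the Laplacians $\mathbf{L}_G^e(T\mathbf{x})$ and $\mathbf{L}_G^e(\mathbf{x})$ are isospectral for every $e\subseteq \mathcal{A}$; in particular $\Lambda(T\mathbf{x}_1,T\mathbf{x}_2,e)=\Lambda(\mathbf{x}_1,\mathbf{x}_2,e)$ for every feasible triple.

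Second, I would verify the two meta-equilibrium conditions for the transformed profile. The minimum-distance constraints defining $\mathbf{X}_1,\mathbf{X}_2$ involve only pairwise distances, so $T\mathbf{x}^*\in\mathbf{X}$. For the Stackelberg component, the attacker's best-response correspondence $BR(\tilde{\mathbf{x}}_1^*,\tilde{\mathbf{x}}_2^*)$ coincides with $BR(\mathbf{x}_1^*,\mathbf{x}_2^*)$ since $\Lambda$ depends only on the spectrum, so $e^*\in BR(\tilde{\mathbf{x}}_1^*,\tilde{\mathbf{x}}_2^*)$ and the leader's saddle value is preserved; hence $(\tilde{\mathbf{x}}_\gamma^*,e^*)$ remains a Stackelberg equilibrium of $\widetilde\Xi_\gamma$ for $\gamma\in\{1,2\}$. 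For the Nash component, any putative profitable unilateral deviation $\mathbf{y}_\gamma$ against $\tilde{\mathbf{x}}_{-\gamma}^*=T\mathbf{x}_{-\gamma}^*$ would, applying the same distance invariance to $T^{-1}\mathbf{y}_\gamma$, give a profitable deviation against $\mathbf{x}_{-\gamma}^*$, contradicting the NE property of $\mathbf{x}^*$.

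The main obstacle I anticipate is less technical than conceptual: ensuring that the transformed profile genuinely differs from the original and lies in the declared action space. The former is immediate whenever $T$ is not the identity (the minimum-distance constraints rule out all robots being coincident, so at least one coordinate must change under $T$); the latter follows because the model imposes only relative geometric constraints, with no absolute positional bounds nor, at equilibrium, any binding incremental-step constraint. Since the Euclidean group of $\mathbb{R}^3$ is continuous and non-trivial, this construction yields uncountably many distinct meta-equilibria. An alternative constructive route would be to exhibit a small symmetric two-layer example in which the attacker faces several equally critical links and two configurations not related by any rigid motion both satisfy the equilibrium conditions, confirming that non-uniqueness persists even modulo the isometry group; however, the rigid-motion argument alone suffices for the stated theorem.
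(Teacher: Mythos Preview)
Your proposal is correct and takes essentially the same approach as the paper: the paper's own proof simply observes that a simultaneous translation (or rotation) of all robot positions, e.g., $(\mathbf{x}_1^*+\zeta,\mathbf{x}_2^*+\zeta,e^*)$ for any $\zeta\in\mathbb{R}^3$, yields another meta-equilibrium because the entire formulation depends only on pairwise distances. Your write-up is in fact more careful than the paper's, as you explicitly verify both the Stackelberg and Nash components and address why the transformed profile is feasible and genuinely distinct.
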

\begin{proof}
To show the nonuniqueness of the meta-equilibrium, one possible way is to find a different position pair $(\mathbf{\tilde{x}}_1^*,\mathbf{\tilde{x}}_2^*)$ but the network configuration is the same with a one under the meta-equilibrium, say $(\mathbf{x}_1^*,\mathbf{x}_2^*)$. This can be achieved by the simultaneous offset or rotation in $\mathbf{x}_1^*$ and $\mathbf{x}_2^*$. For example, under the meta-equilibrium $(\mathbf{x}_1^*,\mathbf{x}_2^*, e^*)$, the profile $(\mathbf{x}_1^*+\zeta,\mathbf{x}_2^*+\zeta, e^*)$ is also a meta-equilibrium, where $\zeta\in\mathbb{R}^3$, and this shows the nonuniqueness of the equilibrium.
\end{proof}
The network configuration at meta-equilibrium can also be different at which the network exhibits various levels of network connectivity. This phenomenon is further demonstrated through case studies in Section \ref{s5}.

\textit{Remark:} Due to nonuniqueness of meta-equilibrium, it is important for the network designers to achieve an equilibrium state with a higher connectivity. We need to deal with a mechanism design or an equilibrium selection problem. One general method to address this type of problem is through a top-down approach. Specifically, we first need to know the complete set of equilibria and then decentralize the solution for two network designers to drive the system to a desired equilibrium state. Hence, in addition to two network operators, we need to include an additional central planner who has perfect information of the system and guides the decentralized network configuration updates of two designers to achieve a particular equilibrium solution.

Another result is on the effectiveness of our proposed strategy comparing with simpler ones without attack anticipation for the network designers. In the proposed model considering adversary, if the attack does not occur, then the network connectivity achieved at the meta-equilibrium is no better than the one obtained by the model without considering adversaries. However, when the attack is successfully launched, the  network performance under the established framework is no worse than the one without considering adversaries. Characterizing the conditions under which these two classes of strategies coincide is not trivial and it is also related to the system parameters of multi-layer MAS. 

In Section \ref{s5}, we use a case study to show the advantage of our proposed framework over the traditional optimal design when the adversary is ignored. In the case study, the equilibrium network is still connected under the strategic attack while the optimal network counterpart is disconnected. Therefore, preparation for and reacting to attacks can yield significant benefits for the MAS network under adversarial environment.

\section{Adversarial Analysis}\label{s4}
In this section, we first analyze the security of MAS network by deriving a closed form solution of the jamming attacker, and then present another type of cyberattacks for further resiliency quantification of the proposed iterative algorithm. Finally, we discuss the benefits of anticipating and reacting to the adversary for the network designers during control design.

\subsection{Adversarial Analysis}\label{security_analysis}
Denote the network as $\widetilde{G}(i,j)=(V,E\setminus(i,j))$ after removing a link $(i,j)\in E$ from network $G$, then, we have $\widetilde{\mathbf{L}}=\mathbf{L}-\Delta \mathbf{L}$ and $\Delta \mathbf{L}=\Delta \mathbf{D}-\Delta \mathbf{A}$, where $\Delta \mathbf{D}$ and $\Delta \mathbf{A}$ are the decreased degree and adjacency matrices, respectively. By using equation \eqref{laplacian2}, we obtain $\Delta \mathbf{D}$ and $\Delta \mathbf{A}$ as follows:
$
\Delta \mathbf{D}=\mathbf{e}_i \tilde{\mathbf{e}}_{i,j}^{T}+\mathbf{e}_j \tilde{\mathbf{e}}_{j,i}^T,
\Delta \mathbf{A}=\mathbf{e}_i \tilde{\mathbf{e}}_{j,i}^T+\mathbf{e}_j \tilde{\mathbf{e}}_{i,j}^T,
$
where $\mathbf{e}_i$ and $\tilde{\mathbf{e}}_{i,j}$ are $n$-dimensional zero vectors except the $i$-th element equaling to 1 and $w_{ij}$, respectively, and similar for $\mathbf{e}_j$ and $\tilde{\mathbf{e}}_{j,i}$. Denote the Laplacian matrix of $\widetilde{G}(i,j)$ as $\widetilde{\mathbf{L}}(i,j)$, and by using $\Delta \mathbf{D}$ and $\Delta \mathbf{A}$, we have
\begin{equation}\label{linkattackL}
\widetilde{\mathbf{L}}(i,j)=\mathbf{L}-\big(\mathbf{e}_i-\mathbf{e}_j\big)\big(\tilde{\mathbf{e}}_{i,j}- \tilde{\mathbf{e}}_{j,i}\big)^T.
\end{equation}

When link $(i,j)$ is attacked, the resulting Laplacian is given by \eqref{linkattackL}. Denote the Fiedler vector of $\mathbf{L}$ as $\mathbf{u}$, and thus $\mathbf{u}^T\mathbf{Lu}=\lambda_2(\mathbf{L})$ based on the definition. By using Courant-Fisher Theorem in \eqref{lambdainf}, we  obtain the following:
\begin{equation}
\begin{split}
\lambda_2 \big(\widetilde{\mathbf{L}}(i,j)\big)&\le \mathbf{u}^T\widetilde{\mathbf{L}}(i,j) \mathbf{u}\\
&=\mathbf{u}^T \big( \mathbf{L}-\big(\mathbf{e}_i-\mathbf{e}_j\big)\big(\tilde{\mathbf{e}}_{i,j}-\tilde{\mathbf{e}}_{j,i}\big)^T \big) \mathbf{u}\\
&=\mathbf{u}^T\mathbf{Lu}-(u_i-u_j)(w_{ij}u_i-w_{ji}u_j)\\
&=\lambda_2(\mathbf{L})-w_{ij}(u_i-u_j)^2.
\end{split}
\end{equation}

Therefore, by removing the link $(i,j)^*$, where
\begin{equation}\label{worst_link_removal}
(i,j)^*\in\mathrm{arg} \max_{(i,j)\in{E}}\ w_{ij}(u_i-u_j)^2,
\end{equation}
the upper bound of $\lambda_2 \big(\widetilde{\mathbf{L}}(i,j)\big)$ is the smallest. The strategy in \eqref{worst_link_removal} can be seen as a greedy heuristic  for the attacker to compromise the network $G$. Specifically, the attacker can apply the above procedure iteratively to find a set of critical links to accommodate the attacker's ability. To this end, the jamming attacker's strategy is to compromise those links with top $\psi$ largest value of $w_{ij}(u_i-u_j)^2$, $i,j\in V$. Therefore, the network operators designs secure strategies by anticipating that these $\psi$ critical links could be compromised by the attacker.

\subsection{Another Type of Cyberattack}\label{two-other_a}
In order to assess the resilience of designed iterative algorithm in Section \ref{iterative_algorithm}, we introduce another type of adversarial attacks to the MAS network called global positioning system (GPS) spoofing attack.

\textit{GPS Spoofing Attack:} 
A GPS spoofing attack aims to deceive a GPS receiver in terms of the object's position, velocity and time by generating counterfeit GPS signals \cite{akos2012s}. In \cite{kerns2014unmanned}, the authors have demonstrated that UAVs can be controlled by the attackers and go to a wrong position through the GPS spoofing attack. We consider the scenario that the compromised robot is spoofed which can be realized by adding a disruptive position signal to the robot's real control command. Therefore, through the GPS spoofing attack, the mobile robot is controlled by the adversary, but it still maintains communications with other robots in the network. In addition, we assume that the attack cannot last forever but for a period of $g_a$ in the discrete time measure, since the resource of an attacker is limited, and the abnormal/unexpected behavior of the other unattacked robots resulting from the spoofing attack can be detected by the network operator.

Specifically, if robot $i$, $i\in V$, is compromised by the spoofing attack at time step $k_1$, and the attack lasts for $g_a$ time steps, then this scenario can be captured by adding the following constraint to $\overline{\mathcal{Q}}_\gamma^k$:
$
x_i(k+1)=x_i(k)+\epsilon(k),\ k=k_1,...,k_1+g_a-1,
$
where $\epsilon(k)$ is the disruptive signal added by the attacker.
The attacked robot is usually randomly chosen. To evaluate the impact of attack, we choose the robot that has the maximum degree  denoted by ${i}_{max}$ and satisfies
\begin{align}
{i}_{max}\in\ \mathrm{arg}\ \max_{i\in V}\sum_{j\in \mathcal{N}_i}w_{ij},
\end{align}
where $\mathcal{N}_i$ is the set of nodes connected to robot $i$. 

The GPS spoofing attack decreases the network connectivity. The resilience of the designed algorithm can be quantified by the increased network performance by the network operators' responses to the cyberattacks.

\subsection{Benefits of Anticipating and Reacting to Adversary}
We comment on the benefits of the established approach that enhances the network resistance to attacks.
First, the Stackelberg modeling incorporates the security consideration of network designers. This anticipative behavior of the defender ensures the resistance of network under successful worst-case attacks. The security consideration is critical and also practical for the network operators in devising proactive defense strategies, and its significance is demonstrated through case studies in Section \ref{s5}. Second, besides the inherent security modeling, reacting to the adversary also helps in improving the system performance if the attack is unanticipated. Then, the operator can respond to the unanticipated attack in the next round which enhances network resilience. Thus, the proposed method guides the secure and resilient decentralized control design of MAS networks.

\section{Case Studies}\label{s5}
In this section, we use case studies to quantify the security and resiliency of the designed algorithm, and identify the interdependency in the multi-layer MAS networks. We adopt YALMIP \cite{lofberg2004yalmip} to solve the corresponding SDP problems. Specifically, we consider a two-layer MAS network in which $G_1$ contains 2 nodes and $G_2$ contains 6 nodes.
To illustrate that the designed framework can be applied to cases where the robots at one layer can further be operated in a decentralized way, we assume that the robots in $G_2$ are divided into 2 equal-size groups connected by a secure link between nodes 3 and 4. The investigated scenario is applicable when the agents in MAS are sparsely distributed in geometric clusters. 

The communication strength between agents follows the one in Fig. \ref{Aij}. Further, two layers of MAS are operated in two planes where the third dimension of their position is fixed satisfying the minimum distance $\rho_{12}$. The initial positions of robots in $G_1$ (upper layer) are (1,3,1.2), (2,3,1.2), and robots in $G_2$ (lower layer) are (0,0,0), (0,1.5,0), (1,0,0), (2,0,0), (3,-1.5,0), (3,0,0). The safety distance between robots in $G_1$ and $G_2$ is $\rho_1=\rho_2=1$, and the maximum distance that robots at each layer can move at each update step is $d_1=d_2=0.2$. The update frequency of network operator $P_1$ is two times faster than network operator $P_2$, i.e., $2c_1=c_2$. In addition, both network designers prepare for the worst-case single link removal of jamming attack, i.e., $|e|=\psi=1$, during the MAS network formation.

\subsection{Secure Design of MAS Networks}\label{secure_config}
First, we illustrate the secure design of two-layer MAS network under the jamming attack using Algorithm \ref{algorithm3}. Fig. \ref{base_case} shows the results. The final positions of agents in $G_1$ are (1.88,-0.13,1.2), (1.88,0.87,1.2), and those in $G_2$ are (0.94,-0.85,0), (0.94,0.65,0), (1.60,-0.10,0), (2.26,0.65, 0), (1.55,-0.29,0), (2.92,1.40,0). The connectivity of the integrated MAS network is iteratively improved, and converges to a steady value 1.4 after approximate 40 steps. During the updates, each network operator anticipates of the strategic jamming attack that compromises the most critical communication link. Hence, the network shown in Fig. \ref{base_case_final_config} is a meta-equilibrium configuration. This example shows the effectiveness of the proposed method in designing secure multi-layer MAS network. To show the nonuniqueness of the equilibrium solutions, we modify the initial positions of agents where the robots in $G_1$ (upper layer) start with (3,1,1.2) and (3,2,1.2). The results are shown in Fig. \ref{base_case_2_for_compare}. We can see that the final positions of agents in $G_1$ are (1.06,0.59,1.2), (2.06,0.59,1.2), and those in $G_2$ are (0.05,0.58,0), (1.55,1.45,0), (1.05,0.58,0), (2.05,0.58, 0), (1.55,-0.29,0), (3.05,0.58,0). Further, the final network configuration as well as network connectivity are different with the ones shown in Fig. \ref{base_case} which corroborate the meta-equilibrium of the proposed game is not unique.

Another critical question is to compare the quality of the solution at meta-equilibrium with the one obtained without considering attacks. To better illustrate the results, we select another set of parameters for communication channels: $\delta=0.1$, $c_1 = 1$, and $c_2=1.2$, where agents have a lower range of communication than the one in Fig. \ref{base_case}. In addition, the number of agents in the upper layer $G_1$ is reduced to 1 and the number of attacks is equal to $\psi = 2$. The adversary's action set $\mathcal{A}$ includes the links between two layers, i.e., inter-links. Other settings are the same as those in previous case studies. The results of network configurations with and without considering attacks are shown in Fig.  \ref{base_case_4_for_compare}. In Fig. \ref{base_case_trajectory_4_for_compare}, since no adversary is present in this scenario, the network connectivity  remains around 0.48. To demonstrate the difference of the optimal solution without considering attack and the equilibrium solution, we introduce two attacks by removing two most critical inter-links using the strategy in Section \ref{security_analysis} to the optimal network (removing links (3,7) and (4,7)) at step 50, and the result is shown in Fig. \ref{base_case_connectivity_4_for_compare}. Comparing with the secure MAS configuration using the proposed control with connectivity 0.17 shown in Fig. \ref{base_case_final_config_4_for_compare_secure_one}, the originally optimal network is completely disconnected after the attack which makes the isolated agent 7 in an unanticipated condition.  This result shows the advantage of our proposed framework over the traditional optimal design, since our approach integrates the security aspects in the MAS control design.

\begin{figure}[t]
  \centering
    \subfigure[]{
    \includegraphics[width=0.6\columnwidth]{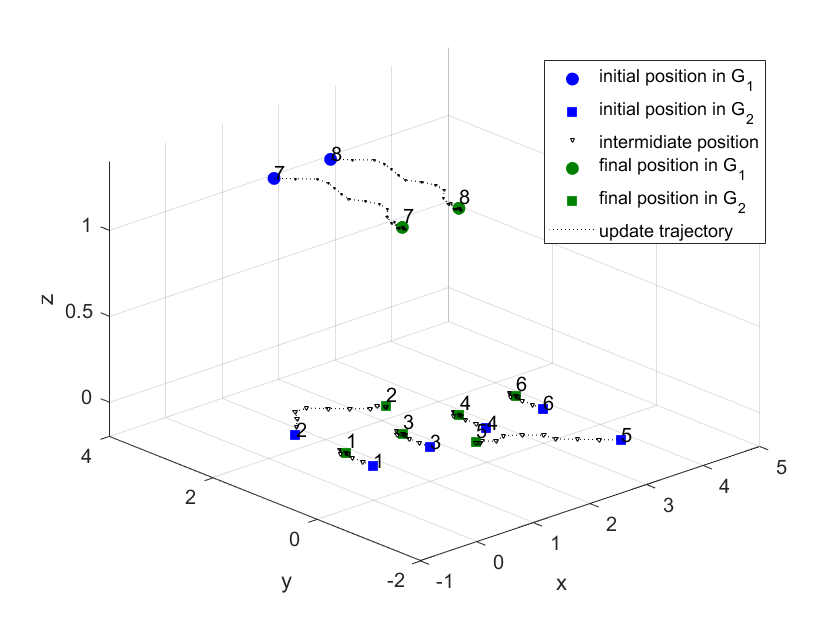}} 
  \subfigure[]{
    \includegraphics[width=0.47\columnwidth]{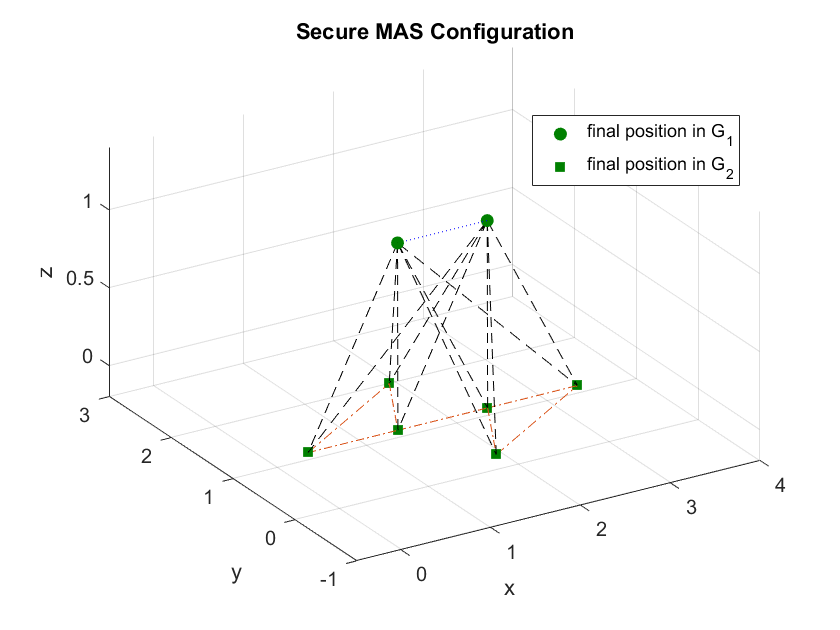}  \label{base_case_final_config}} 
      \subfigure[]{
    \includegraphics[width=0.47\columnwidth]{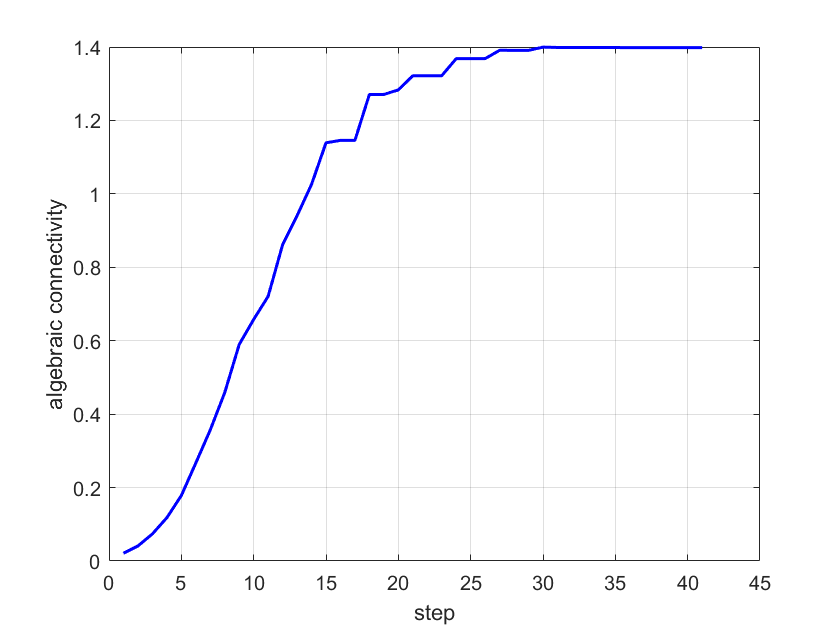}} 
  \caption[]{(a) shows the evolutionary configuration of secure MAS network at each step. (b) depicts the final network configuration. (c) shows the network connectivity under attack with $\psi = 1$.}
  \label{base_case}
\end{figure}

\begin{figure}[t]
  \centering
    \subfigure[]{
    \includegraphics[width=0.6\columnwidth]{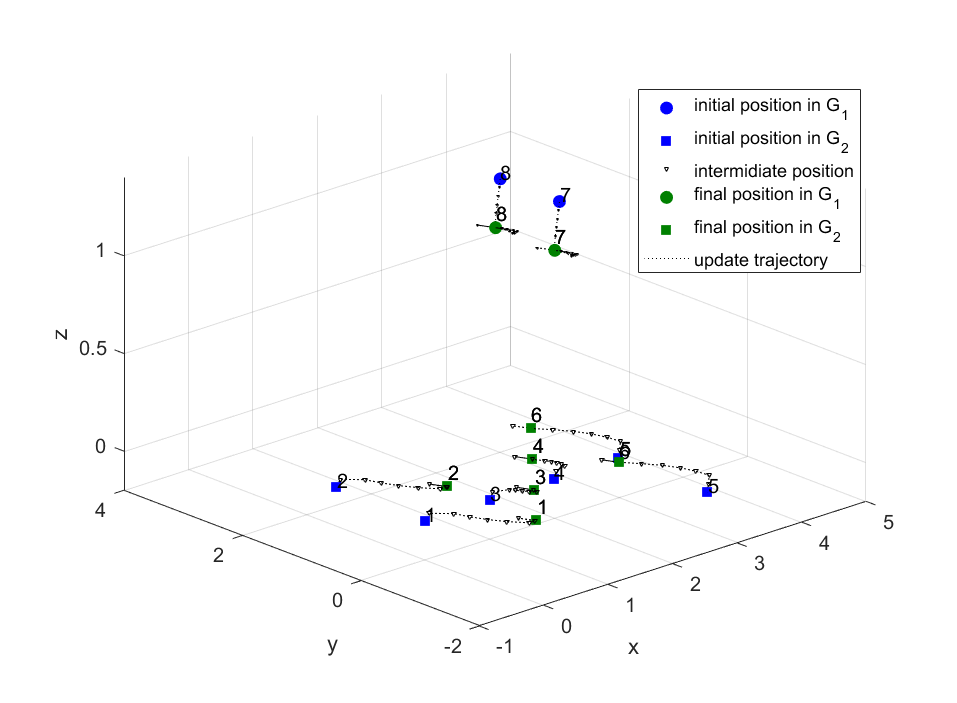}} 
  \subfigure[]{
    \includegraphics[width=0.47\columnwidth]{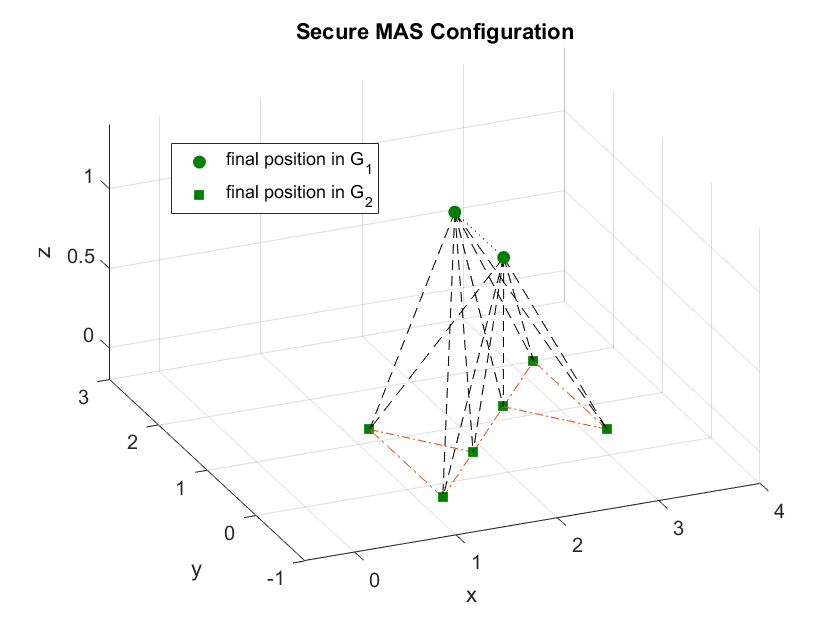}  \label{base_case_final_config_2_for_compare}} 
      \subfigure[]{
    \includegraphics[width=0.47\columnwidth]{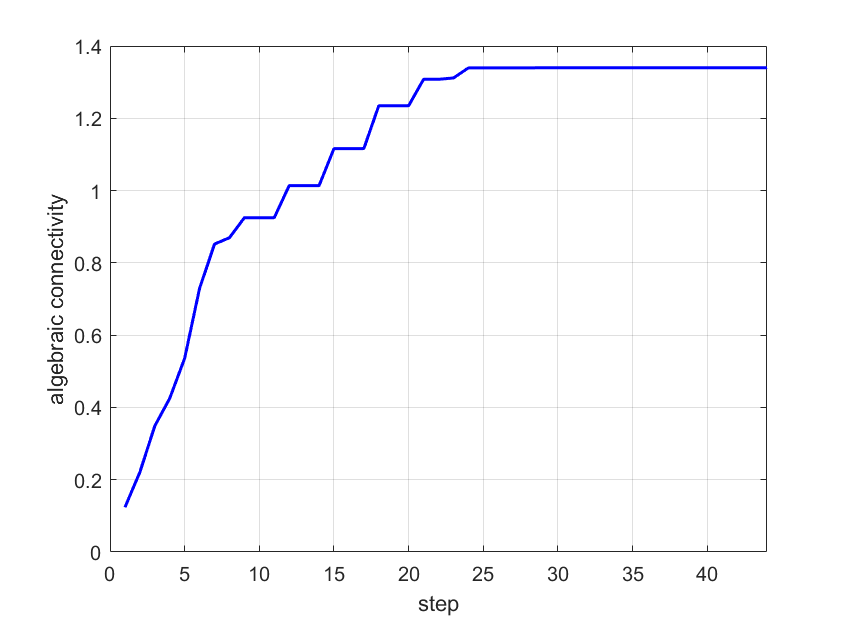}} 
  \caption[]{(a), (b), and (c) show the results of the ones in Fig. \ref{base_case}. The initial conditions of agents are modified and the final equilibrium network is different with the one in Fig. \ref{base_case} which shows the nonuniqueness of the equilibrium.}
  \label{base_case_2_for_compare}
\end{figure}

\begin{figure}[t]
  \centering
    \subfigure[]{
    \includegraphics[width=0.47\columnwidth]{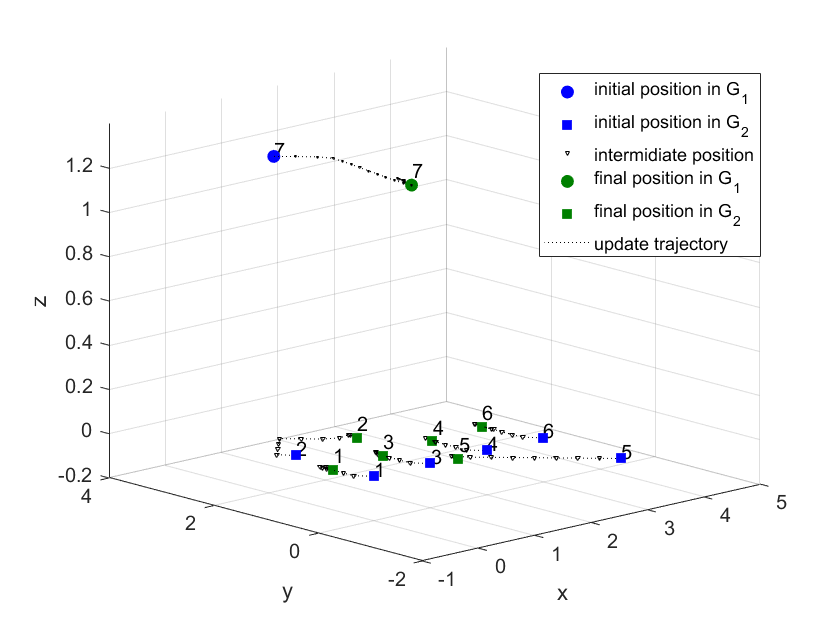}\label{base_case_trajectory_4_for_compare}} 
      \subfigure[]{
    \includegraphics[width=0.47\columnwidth]{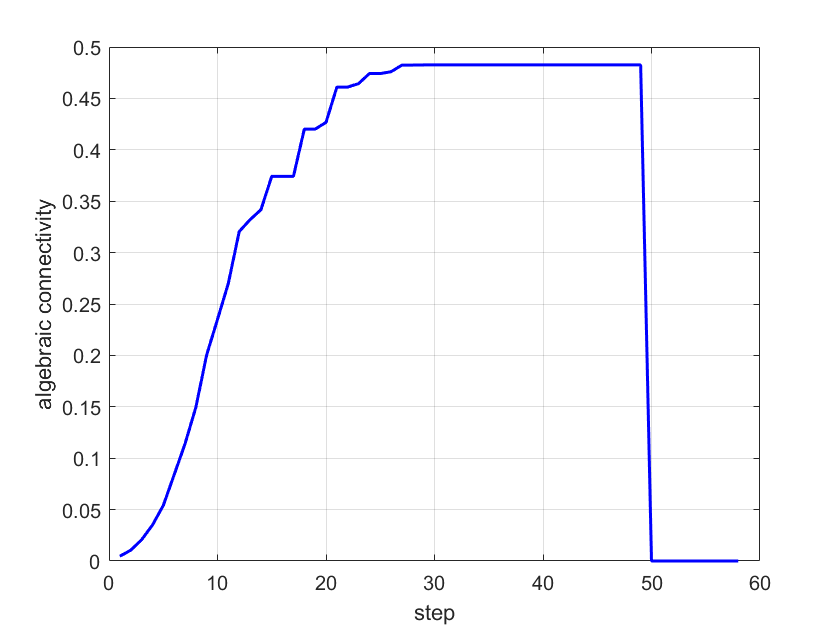}\label{base_case_connectivity_4_for_compare}} 
     \subfigure[]{
    \includegraphics[width=0.46\columnwidth]{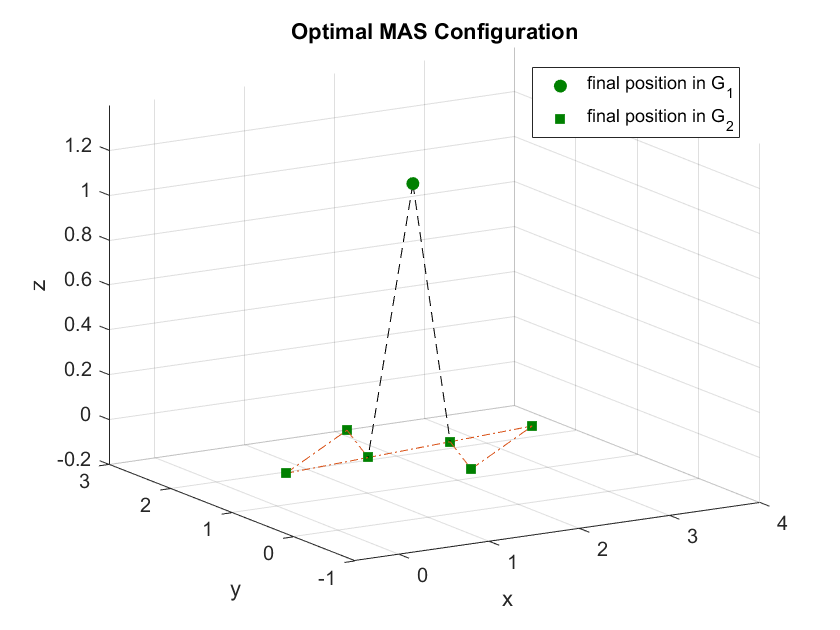}
    \label{base_case_final_config_4_for_compare}} 
         \subfigure[]{
    \includegraphics[width=0.47\columnwidth]{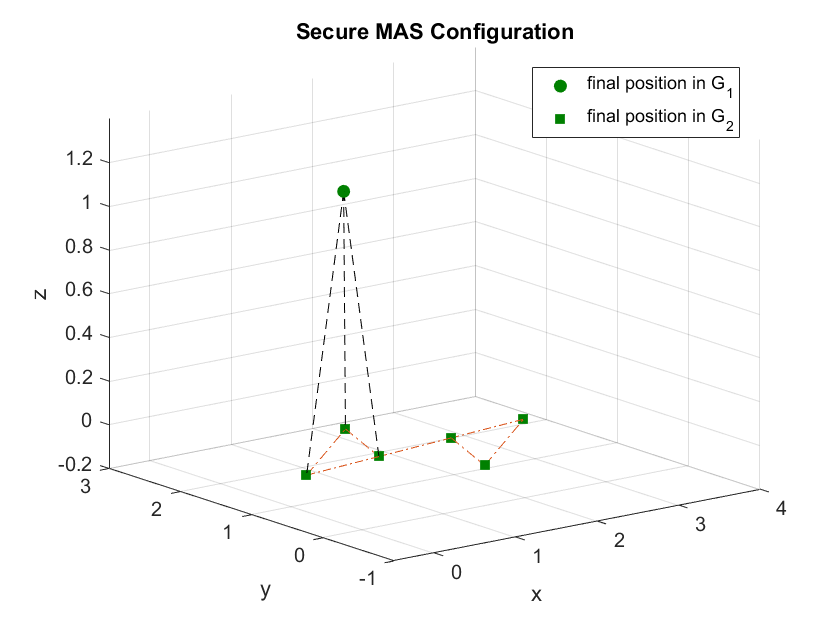}
    \label{base_case_final_config_4_for_compare_secure_one}}
  \caption[]{(a), (b), and (c) show the results when the network designers do not anticipate attacks. (d) is the equilibrium network when designers considers two link attacks. The final network configurations in (c) and (d) are different. After introducing the worst-case attacks to the optimal network (compromising the inter-links (3,7) and (4,7)), the optimal network in (c) is disconnected and the connectivity becomes 0. However, the meta-equilibrium network in (d) is still connected after 2 inter-links attacks with a connectivity of 0.17 which demonstrates the enhanced security of MAS networks. }
  \label{base_case_4_for_compare}\vspace{-3mm}
\end{figure}

\subsection{Resilience of the Network to Cyberattacks}
Second, we investigate the resilience of the designed MAS network to cyberattacks presented in Section \ref{two-other_a}. The metric used for quantifying the resilience is the recovery ability of network connectivity after the adversarial attack.

For the GPS spoofing attack, we assume that it lasts for 5 time steps from step 9 to 14 before the detection of abnormal movement of MAS by the network operator. Note that the attack duration depends on the detection ability of the network designer. After the identification of attack, the network designer can reboot the compromised agent for it returning to the normal state.  Moreover, the horizontal axis in attacker's disruptive command $\epsilon(k)$, $k=9,...,14$, is drawn uniformly from $[0,0.2]$. Fig. \ref{spoofing_case} shows the obtained results where agent 7 is compromised. Specifically, the network connectivity encounters a sudden drop at step 9, from 0.58 to 0.37, as shown in Fig. \ref{spoofing_connectivity} due to the spoofing attack. At step 12 which is still in the attacking window, the connectivity, however, has an increase which is a result from the updates of agents at the lower layer $G_2$. When the spoofing attack is removed, the network recovers quickly after step 14 which shows agile resilience of the proposed control algorithm. Note that the final MAS network configuration is the same as the one in Section \ref{secure_config}. 

We next investigate a scenario in which the spoofing attack is introduced after the network reaching an equilibrium. Specifically, the attack launches at step 35 and it lasts for 6 steps. The results are shown in Fig. \ref{spoofing_case_equilibrium_attack}. Similar to the previous case, the network can response to the attack in a fast fashion and tries to recover the network connectivity with its best effort during the attacking window. After the detection and removal of the attack, the network performance is improved and the equilibrium network configuration is achieved which is the same as the previous one before attack. Thus, the designed two-layer MAS network using Algorithm \ref{algorithm3} is resilient to spoofing cyberattacks.

\begin{figure}[t]
  \centering
  \subfigure[]{
    \includegraphics[width=0.48\columnwidth]{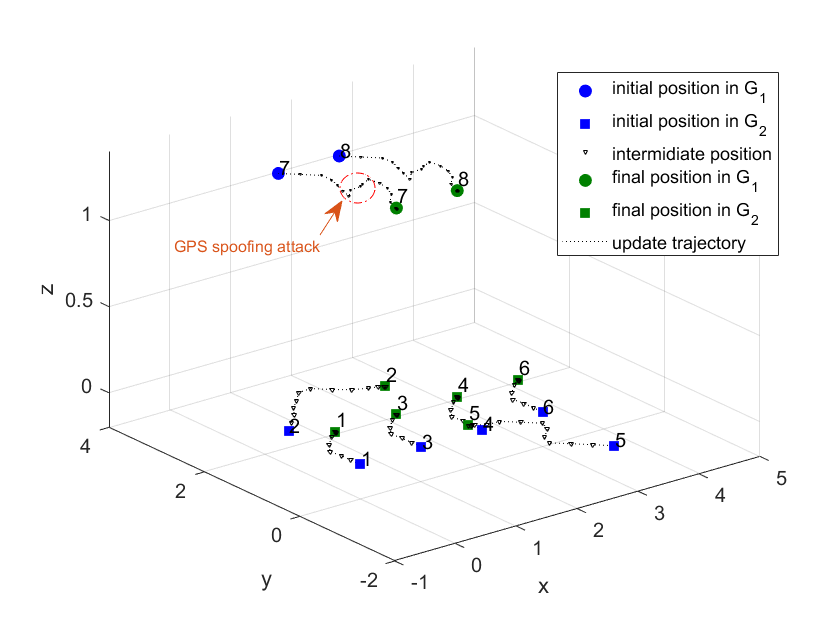}} 
      \subfigure[]{
    \includegraphics[width=0.48\columnwidth] {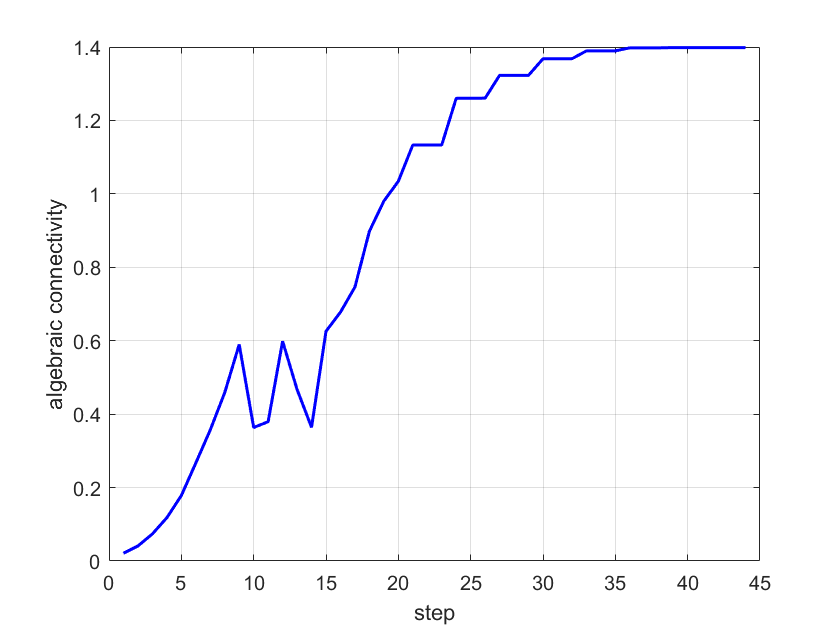}\label{spoofing_connectivity}}
  \caption[]{(a) shows the evolutionary configuration of secure MAS network at each step. The GPS spoofing attack is introduced at time step 9, and it lasts for 5 steps. The attack duration depends on the detection ability of the network designer.  (b) shows the corresponding network connectivity.}
  \label{spoofing_case}
\end{figure}

\begin{figure}[t]
  \centering
  \subfigure[]{
    \includegraphics[width=0.48\columnwidth]{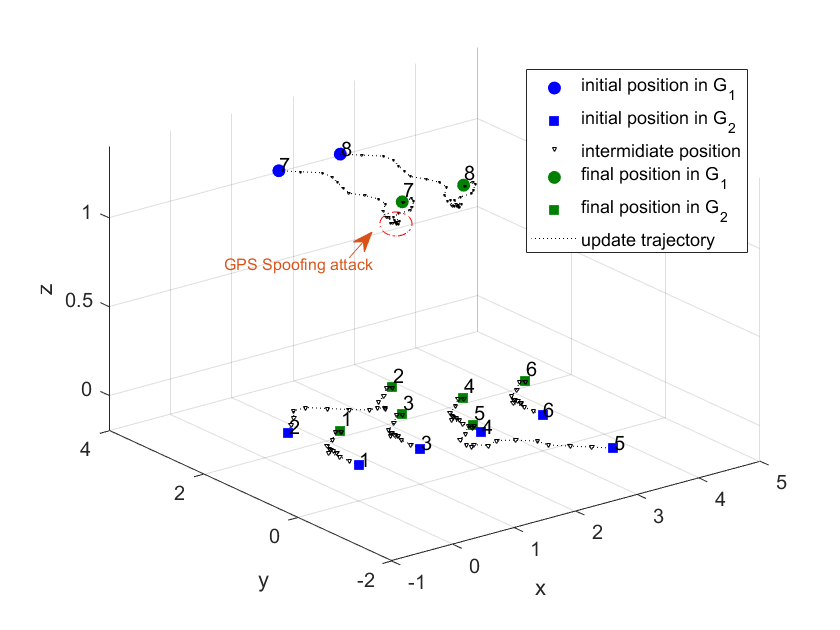}} 
      \subfigure[]{
    \includegraphics[width=0.48\columnwidth]{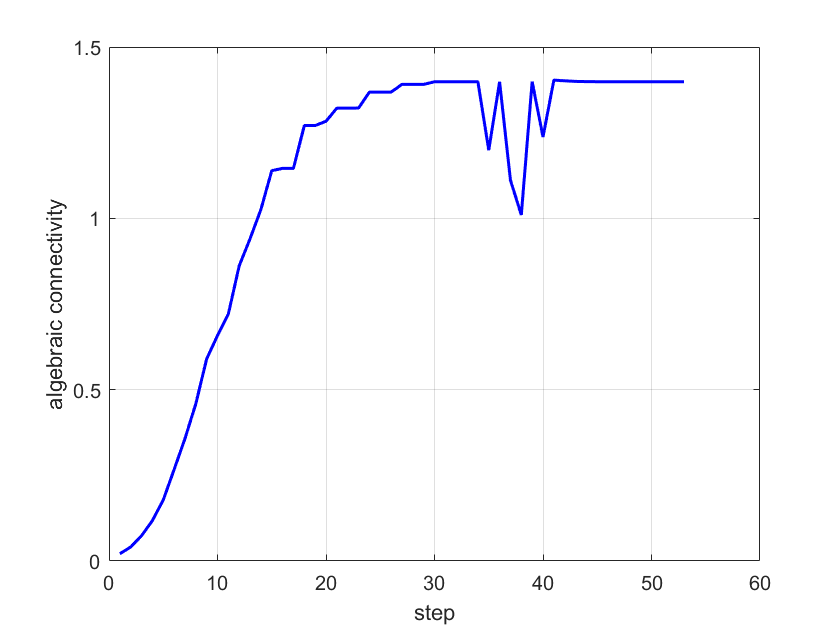}\label{spoofing_connectivity_equilibrium_attack}}
  \caption[]{(a) shows the evolutionary configuration of secure MAS network at each step. (b) shows the corresponding network connectivity. The spoofing attack launches at step 35 and it lasts for 6 steps. The network recovers and reaches a meta-equilibrium quickly after the removal of attack.}
  \label{spoofing_case_equilibrium_attack}\vspace{-3mm}
\end{figure}

\section{Conclusion}\label{conclusion}
In this paper, we have investigated the secure control of multi-layer MAS networks under the adversarial environment by establishing a games-in-games framework. The newly proposed meta-equilibrium solution concept has successfully captured the secure and uncoordinated design of each layer of MAS network through integrative Stackelberg and Nash games. The developed iterative algorithm has been shown effective in maximizing the network algebraic connectivity under adversaries, yielding a meta-equilibrium network configuration. Case studies have shown that the designed multi-layer MAS network is of agile resilience to various kinds of cyberattacks. As for future work, we can consider the network operators having different estimations of severity of attacks, and design the multi-layer MAS network with heterogeneous security requirements. Another research direction is to design mechanisms and decentralized algorithms to drive the multi-layer MAS to a desired meta-equilibrium if multiple equilibria exist. We can also investigate other games-in-games frameworks where the attacker's behavior is more sophisticated, e.g., the attacker makes decisions over a time horizon.

\appendices

\section{Derivation of Equation \eqref{discretedistance}}\label{appDerive}
The dynamics of each robot $i$ is originally described by continuous-time models. Denote $\mathcal{Z}_{ij}(t) = ||x_{ij}(t)||_2^2$
and then differentiate $||x_{ij}(t)||_2^2$ with respect to time $t$, and we obtain
\begin{equation}\label{differentiate}
2\{\dot{x}_i(t)-\dot{x}_j(t)\}^T\{x_i(t)-x_j(t)\}=\dot{\mathcal{Z}}_{ij}(t).
\end{equation}
By using Euler's first order method, i.e.,
$
x(t)\rightarrow x(k),\ \dot{x}(t)\rightarrow \frac{x(k+m)-x(k)}{m \tau},
$
where $m\in\mathbb{N}^+$, and $m \tau$ is the time interval between two updates, we have
$ 2\{\dot{x}_i(t)-\dot{x}_j(t)\}^T\{x_i(t)-x_j(t)\}\cdot m \tau
=2\{{x}_i(k+m)-{x}_j(k+m)\}^T\{x_i(k)-x_j(k)\} -2||x_i(k)-x_j(k)||_2^2
=2\{{x}_i(k+m)-{x}_j(k+m)\}^T\{x_i(k)-x_j(k)\}-2{\mathcal{Z}}_{ij}(k),$
and 
${\mathcal{\dot Z}}_{ij}(t)\cdot m \tau={\mathcal{Z}}_{ij}(k+m)-{\mathcal{Z}}_{ij}(k).$
Thus, we can rewrite \eqref{differentiate} as
$
{\mathcal{Z}}_{ij}(k+m)+{\mathcal{Z}}_{ij}(k) = ||x_{ij}(k+m)||_2^2 + ||x_{ij}(k)||_2^2 = 2\{{x}_i(k+m)-{x}_j(k+m)\}^T\{x_i(k)-x_j(k)\}
$.

\bibliographystyle{IEEEtran}
\bibliography{IEEEabrv,references}

\end{document}